\newcommand{\set}[1]{\left\{#1\right\}}
\newcommand{\norm}[1]{\left\Vert #1 \right\Vert}
\newcommand{\abs}[1]{\left\vert #1 \right\vert}
\newcommand{\ra}{\rightarrow}
\newcommand{\Real}{\mathbb{R}}
\newcommand{\eps}{\varepsilon}
\newcommand{\B}{\mathcal{B}}
\renewcommand{\S}{\mathcal{S}}
\renewcommand{\subset}{\subseteq}
\newcommand{\R}{\mathcal{R}}
\newcommand{\Rd}{\mathcal{R}_\delta}
\newcommand{\Sd}{\mathcal{S}_\delta}
\newcommand{\Wd}{\mathcal{W}_\delta}
\newcommand{\Gd}{\mathcal{G}_\delta}
\newcommand{\Pd}{\Phi_\delta}
\newcommand{\dx}{\dot{x}}
\newcommand{\ymmark}[1]{{\color{black} #1}}
\newcommand{\U}{\mathcal{U}}
\newcommand{\ub}{\mathbf{u}}
\newcommand{\BlackBox}{\rule{1.5ex}{1.5ex}}  %
\newenvironment{proof}{\par\noindent{\bf Proof\ }}{\hfill\BlackBox\\[2mm]}
\begin{document}

\begin{frontmatter}

\title{Smooth Converse Lyapunov-Barrier Theorems for Asymptotic Stability with Safety Constraints and Reach-Avoid-Stay Specifications\thanksref{footnoteinfo}} %

\thanks[footnoteinfo]{This paper was not presented at any IFAC 
meeting. Corresponding author: Jun Liu. Tel. +1 519-888-4567 ext. 37550. Fax +1 519-746-4319.}

\author{Yiming Meng} \ead{yiming.meng@uwaterloo.ca}, 
\author{Yinan Li}\ead{yinan.li@uwaterloo.ca},  
\author{Maxwell Fitzsimmons}\ead{mfitzsimmons@uwaterloo.ca}, and 
\author{Jun Liu}\ead{j.liu@uwaterloo.ca}

\address{Department of Applied Mathematics\\
       University of Waterloo\\
       Waterloo, Ontario N2L 3G1, Canada\\}  %

\begin{keyword}                           %
Lyapunov functions; Barrier functions; Reachability; Stability; Safety; Reach-avoid-stay specifications; { Stability} with safety guarantees; Converse theorems.  
\end{keyword}                             %

\begin{abstract}                          %
Stability and safety are two important aspects in safety-critical control of dynamical systems. It has been a well established fact in control theory that stability properties can be characterized by Lyapunov functions. Reachability properties can also be naturally captured by Lyapunov functions for finite-time stability. Motivated by safety-critical control applications, such as in autonomous systems and robotics, there has been a recent surge of interests in characterizing safety properties using barrier functions. Lyapunov and barrier functions conditions, however, are sometimes viewed as competing objectives. In this paper, we provide a unified theoretical treatment of Lyapunov and barrier functions in terms of converse theorems for stability properties with safety guarantees and reach-avoid-stay type specifications. We show that if a system (modeled as a  dynamical system \ymmark{with measurable perturbations}) possesses a stability with safety property, then there exists a smooth Lyapunov function to certify such a property. This Lyapunov function is shown to be defined on the entire set of initial conditions from which solutions satisfy this property. A similar but slightly weaker statement is made for reach-avoid-stay specifications. We show by a simple example that the latter statement cannot be strengthened without additional assumptions. {We further extend the results for systems with control inputs and prove existence of converse Lyapunov-barrier functions for reach-and-avoid specifications.} \ymmark{One clear limitation of the results of this paper is that the converse Lyapunov-barrier theorems are not constructive, as with classical converse Lyapunov theorems. We believe, however, that the unified necessary and sufficient conditions with a single Lyapunov-barrier function are of theoretical interest and can hopefully shed some light on computational approaches.}

\end{abstract}

\end{frontmatter}

\section{Introduction}

Lyapunov stability theory \cite{lyapunov1992general} has been a cornerstone of automatic control. It is well known that stability properties for various models of nonlinear systems can be characterized by Lyapunov functions in the form of converse Lyapunov theorems \cite{massera1949liapounoff,wilson1969smoothing,kurzweil1956inversion,lin1996smooth,clarke1998asymptotic,teel2000smooth} (see also \cite[Section 1.1]{teel2000smooth} for a nice historical account). The corresponding Lyapunov conditions characterize the regularity of dynamical flows and provide crucial criteria for stability analysis. \ymmark{Assuming local Lipschitz continuity of the system dynamics, one can show that such Lyapunov functions possess additional nice properties such as smoothness  %
that can be used to infer robust stability properties \cite{lin1996smooth}. The same idea also constitutes the underlining philosophy of applying Lyapunov methods to control design \cite{freeman2008robust}}. 

In recent years, safety properties for dynamical systems received considerable attention, primarily motivated by safety-critical control applications, such as in autonomous cyber-physical systems and robotics \cite{agrawal2017discrete,cheng2019end,hsu2015control,nguyen2020dynamic,ames2019control,yang2019self}. In these applications, barrier functions \cite{prajna2004safety} are  used to certify that solutions of a given system can stay within a prescribed safe set, along with their control variants, called control barrier functions \cite{wieland2007constructive,ames2016control}, to provide feedback controls that render the system safe. The barrier function approach can be further combined with Lyapunov method to satisfy stability and safety requirements at the same  \cite{tee2009barrier,ames2016control,wu2019control,romdlony2016stabilization,jankovic2017combining,niu2013barrier}. Such formulations are amenable to optimization-based solutions enabled by quadratic programming \cite{ames2016control,xu2015robustness} or model predictive control \cite{wu2019control}, provided that the control system is in a control-affine form. 

Another important characteristic of a dynamical system is whether or not its solutions can reach a certain target set from a given initial set ({with or without control}). This is defined as reachability, which plays a key role in analysis and {in particular} control of dynamical systems \cite{bertsekas1971minimax}. Reachability analysis and control can also be viewed as an important special case of verification and control of dynamical systems with respect to more general formal specifications \cite{li2020robustly}. Since asymptotic stability entails asymptotic attraction, reachability can be naturally captured by asymptotic stability and Lyapunov conditions. 

The stability/reachability and safety objectives, however, are sometimes conflicting. For example, while a system can reach a target set from a given initial set, it may have to traverse an unsafe region to do so. For this reason, when formulating the problem as an optimization problem, some authors defined safety as a hard constraint, and reachability/stability as a soft (performance) requirements \cite{ames2016control}. 

The main objective of this paper is to provide a theoretical perspective on uniting Lyapunov and barrier functions. The level sets of Lyapunov functions naturally define invariant sets that can be used to certify safety. The work in \cite{tee2009barrier} used the notion of ``barrier Lyapunov function" to ensure stability under state constraints is achieved. As pointed out by \cite{ames2016control}, such conditions sometimes are overly strong and conservative. The more recent work \cite{romdlony2016stabilization} proposed the notion of (control) Lyapunov-barrier function (\ymmark{the lower-bounded function $W$ in \cite[Proposition 1 and Definition 2]{romdlony2016stabilization}}), and derived sufficient conditions for stability and stabilization with guaranteed safety. \ymmark{Despite the potential practical value of the control design framework presented in \cite{romdlony2016stabilization}, the type of Lyapunov-barrier functions considered in \cite{romdlony2016stabilization} (defined on $\Real^n$ and radially unbounded) implicitly imposes strong conditions on the unsafe set} (e.g., it has to be unbounded \cite[Theorem 11]{braun2017existence}). The authors of \cite{braun2017existence} then proposed sufficient conditions for safe stabilization using non-smooth control Lyapunov functions (see also \cite{braun2019complete}). {\color{black} The same authors also pointed out a technical inconsistency (see \cite{braun2020comment} for details) of the control Lyapunov-barrier conditions proposed in \cite{romdlony2016stabilization}. All these indicate that unifying Lyapunov and barrier functions is a non-trivial task.} 

In this paper, we approach this question of uniting Lyapunov and barrier functions from the converse direction. Different from the aforementioned work, we aim to formulate necessary (and sufficient) Lyapunov conditions for asymptotic stability under state constraints. We show that, if we restrict the domain of the Lyapunov function to the set of initial conditions from which solutions can simultaneously satisfy the conditions of asymptotic stability and safety, then a smooth Lyapunov function can be found, building upon earlier results on converse Lyapunov functions \cite{kurzweil1956inversion,teel2000smooth}. In particular, the results from  \cite{teel2000smooth} play a key role in inspiring us to formulate a Lyapunov function that is defined on the entire set of initial conditions from which the stability with safety specification is { satisfied}. We further extend the converse theorems to reach-avoid-stay type specifications, for which solutions of a system are required to reach a target set within a finite time and remain there after, while avoiding an unsafe set. Since reachability (similar to asymptotic attraction) does not impose any stability conditions (see Vinograd's example \cite[p.~120]{meiss2007differential}), we in general cannot expect to find a Lyapunov function that is defined \ymmark{in} a neighborhood of the target set. We use a robustness argument \cite{liu2020converse} to obtain a slightly weaker statement in the sense that if a reach-avoid-stay specification is satisfied robustly, then there exists a robust Lyapunov-barrier function that is robust under perturbations arbitrarily close to that of the original system.

The main contributions of the paper are summarized as follows. 
\begin{itemize}
    \item We formulate the problems of stability with safety and reach-avoid-stay specifications and establish connections between them.
    \item We prove a smooth converse Lyapunov-barrier function theorem that is defined on the entire set of initial conditions from which the stability with safety property is satisfied. 
    \item We extend the converse Lyapunov-barrier function theorem to reach-avoid-stay type specifications using a robustness argument. We show by example that such statements are the strongest one can obtain. 
    \item We extend the converse Lyapunov-barrier functions to converse control Lyapunov-barrier functions w.r.t. reach-avoid-stay specifications, \ymmark{provided that there exists a Lipschitz continuous feedback law.}
\end{itemize}

{\color{black} One clear limitation of the results of this paper is that the converse Lyapunov-barrier theorems are not constructive, as with classical converse Lyapunov theorems. Nonetheless, the unified necessary and sufficient conditions with a single Lyapunov-barrier function are of theoretical interest and can hopefully shed some light on developing computational approaches (see, e.g., \cite{Ravanbakhsh2017learning,berkenkamp2016safe,zhao2020synthesizing,ratschan2010providing,djaballah2017construction}) for stability with safety or reach-avoid-stay specifications.} 

The rest of this paper is organized as follows. In Section \ref{sec:prel}, we present the problem formulation. In Section \ref{sec:stability}, we prove a smooth converse Lyapunov-barrier function theorem for stability with safety guarantees. In Section \ref{sec:reach}, we extend the converse Lyapunov-barrier theorem to reach-avoid-stay type specifications. The paper is concluded in Section \ref{sec:con}. 

We list some notation used in this paper. We use $\Real^n$ to denote the Euclidean space of dimension $n>1$, $\Real$ the set of real numbers, $\Real_{>0}$ the set of positive real numbers, and $\Real_{\ge 0}$ the set of nonnegative real numbers. For $x\in\Real^n$ and $r\ge 0$, we denote the ball of radius $r$ centered at $x$ by $x+r\B=\set{y\in\Real^n:\,\abs{y-x}\le r}$, where $\abs{\cdot}$ is the Euclidean norm. For a closed set $A\subset\Real^n$ and $x\in\Real^n$, we denote the distance from $x$ to $A$ by $\norm{x}_{A}=\inf_{y\in A}\norm{x-y}$ and $r$-neighborhood of $A$ by $A+r\B=\cup_{x\in A}(x+r\B)=\set{x\in\Real^n:\,\norm{x}_A\le r}$. For a set $A\subseteq\Real^n$, $\overline{A}$ denotes its closure.  %
For two sets $A,B\in\Real^n$, we use $A\setminus B$ to denote the set difference defined by $A\setminus B=\set{x:\,x\in A,\,x\not\in B}$. 
We say a function $\alpha:\,\Real_{\ge 0}\ra\Real_{\ge 0}$ belongs to class $\mathcal{K}$ if it is continuous, zero at zero, and strictly increasing. It is said to belong to $\mathcal{K}_{\infty}$ if it belongs to class $\mathcal{K}$ and is unbounded. A function $\beta:\,\Real_{\ge 0}\times\Real_{\ge 0}\ra\Real_{\ge 0}$ is said to belong to class $\mathcal{KL}$ if, for each $t\ge 0$, $\beta(\cdot,t)$ belongs to class $\mathcal{K}$ and, for each $s\ge 0$, $\beta(s,\cdot)$ is decreasing and satisfies $\lim_{s\ra\infty}\beta(s,t)=0$.  

\section{Preliminaries} \label{sec:prel}

Consider a continuous-time dynamical system 
\begin{equation}\label{eq:sys}
     \dx= f(x),
\end{equation}
where $x\in\Real^n$ and $f:\,\Real^n\ra\Real^n$ is assumed to be locally Lipschitz. For each $x_0\in\Real^n$, we denote the unique solution starting from $x_0$ and defined on the maximal interval of existence by $\phi(t;x_0)$. For simplicity of notation, we may also write the solution as $\phi(t)$ if $x_0$ is not emphasized or as $\phi$ if the argument $t$ is not emphasized. %

Given a scalar $\delta\ge 0$, a \emph{$\delta$-perturbation} of the dynamical system (\ref{eq:sys}) is described by the differential inclusion
\begin{equation}\label{eq:p1}
    \dx \in F_{\delta}(x),
\end{equation}
where $F_{\delta}(x)=f(x)+\delta\B$. An equivalent description of the  $\delta$-perturbation of system (\ref{eq:sys}) can be given by %
\begin{equation}\label{eq:p2}
    \dx = f(x) + d,
\end{equation}
where $d:\,\Real\ra\delta\B$ is any measurable signal. We denote system (\ref{eq:sys}) by $\S$ and its $\delta$-perturbation by $\Sd$. Note that $\Sd$ reduces to $\S$ when $\delta=0$. A solution of $\Sd$ starting from $x_0$ can be denoted by $\phi(t;x_0,d)$, where $d$ is a given disturbance signal. We may also write the solution simply as $\phi(t)$ or $\phi$. 

We introduce some notation for reachable sets of $\Sd$. Denote the set of all solutions for $\Sd$ starting from $x_0$ by $\Phi_{\delta}(x_0)$.  Let $\R_{\delta}^t(x_0)$ denote the set reached by solutions of $\Sd$ at time $t$ starting from $x_0$, i.e.,
$$
\R_{\delta}^t(x_0) = \set{\phi(t):\,\phi\in \Pd(x_0))}. 
$$
For $T\ge 0$, we define 
$$
\R_{\delta}^{t\ge T}(x_0) = \bigcup_{t\ge T} \R_{\delta}^t(x_0),\quad  \R_{\delta}^{0\le t\le T}(x_0) = \bigcup_{0\le t\le T} \R_{\delta}^t(x_0),
$$
and write 
$$
\R_{\delta}(x_0) = \R_{\delta}^{t\ge 0}(x_0). 
$$
For a set $W\subseteq\Real^n$, 
\begin{align*}
\R_{\delta}^t(W)& =  \bigcup_{x_0\in W} \R_{\delta}^t(x_0),\\
\R_{\delta}^{t\ge T}(W) & =  \bigcup_{x_0\in W} \R_{\delta}^{t\ge T}(x_0), \\
\R_{\delta}^{0\le t\le T}(W) & =  \bigcup_{x_0\in W} \R_{\delta}^{0\le t\le T}(x_0), \\
\R_{\delta}(W) &=  \bigcup_{x_0\in W} \R_{\delta}(x_0). 
\end{align*}

\begin{defn}[Forward invariance]
A set $\Omega\subset\Real^n$ is said to be forward invariant for $\Sd$ (or $\delta$-robustly forward invariant for $\S$), if solutions from $\Omega$ are forward complete (i.e., defined for all positive time) $\mathcal{R}_\delta(\Omega)\subseteq\Omega$.
\end{defn}

\subsection{Reach-avoid-stay and stability with safety guarantees}

In this subsection, we formally define two common types of properties for solutions of $\Sd$ and highlight the connections between them. 

The first one is on reaching a target set in finite time and remaining there after, while avoiding an unsafe set. This is often called a reach-avoid-stay type specification. 

\begin{defn}[Reach-avoid-stay specification]
We say that $\Sd$ satisfies a reach-avoid-stay specification $(W,U,\Omega)$, where $W,U,\Omega\subseteq\Real^n$, if the following conditions hold:
\begin{enumerate}
    \item (reach and stay w.r.t. $\Omega$) Solutions of $\Sd$ starting from $W$ are defined for all positive time (i.e., forward complete) and there exists some $T\ge 0$ such that $\R_{\delta}^{t\ge T}(W)\subseteq \Omega$. 
    \item (safe w.r.t. $U$) $\R_{\delta}(W)\cap U=\emptyset$. 
\end{enumerate}
If these conditions hold, we also say that $\S$ $\delta$-robustly satisfies the reach-avoid-stay specification $(W,U,\Omega)$. 
\end{defn}

A closely related property for solutions of $\Sd$ is stability with safety guarantees. We first define stability for solutions of $\Sd$ w.r.t. a closed set. 

\begin{defn}[Set stability]\label{def:stability}
A closed set $A\subset\Real^n$ is said to be \textit{uniformly asymptotically stable} (UAS) for $\Sd$ if the following two conditions are met:
\begin{enumerate}
    \item (uniform stability) For every $\eps>0$, there exists a $\delta_{\eps}>0$ such that $\norm{\phi(0)}_A<\delta_\eps$ implies that $\phi(t)$ is defined for $t\ge 0$ and $\norm{\phi(t)}_A<\eps$ for any solution $\phi$ of $\Sd$ for all $t\geq 0$; and 
    \item (uniform attractivity) There exists some $\rho>0$ such that, for every $\eps>0$, there exists some $T>0$ such that $\phi(t)$ is defined for $t\ge 0$ and $\norm{\phi(t)}_A<\eps$ for any solution $\phi$ of $\Sd$ whenever $\norm{\phi(0)}_A<\rho$ and $t\ge T$. 
\end{enumerate}
If these conditions hold, we also say that $A$ is $\delta$-robustly UAS (or $\delta$-RUAS) for $\S$. 
\end{defn}

\begin{defn}[Domain of attraction]
If a closed set $A\subseteq\Real^n$ is $\delta$-RUAS for $\S$, we further define the domain of attraction of $A$ for $\Sd$, denoted by 
$\mathcal{G}_{\delta}(A)$, as the set of all initial states $x\in\Real^n$ such that any solution $\phi\in\Pd(x)$ is defined for all positive time and converges to the set $A$, i.e., 
\begin{equation*}
    \mathcal{G}_{\delta}(A) = \big\{x\in \Real^n:\,\forall \phi\in \Pd(x),   \lim_{t\ra\infty}\norm{\phi(t)}_{A}=0\big\}. 
\end{equation*}
\end{defn}

\begin{defn}[Stability with safety guarantee]
We say that $\Sd$ satisfies a stability with safety guarantee specification $(W,U,A)$, where $W,U,A\subseteq\Real^n$ and {$A$ is closed}, if the following conditions hold:
\begin{enumerate}
    \item (asymptotic stability w.r.t. $A$) The set $A$ is UAS for $\Sd$ and the domain of attraction of $A$ contains $W$, i.e. $W\subseteq\mathcal{G}_\delta(A)$. 
    \item (safe w.r.t. $U$) $\R_{\delta}(W)\cap U=\emptyset$. 
\end{enumerate}
If these conditions hold, we also say that $\S$ $\delta$-robustly satisfies the stability with safety guarantee specification $(W,U,A)$. 
\end{defn}

\section{Converse Lyapunov-Barrier Function for Stability with Safety Guarantees} \label{sec:stability}

In this section, we derive a converse Lyapunov-barrier function theorem for $\Sd$ satisfying a stability with safety guarantee specification $(W,U,A)$.

\begin{defn}\cite{teel2000smooth}
Let $A\subset\Real^n$ be a compact set contained in an open set $D\subset\Real^n$. A continuous function $\omega:\,D\ra\Real_{\ge 0}$ is said to be a proper indicator for $A$ on $D$ if the following two conditions hold: (1) $\omega(x)=0$ if and only if $x\in A$; (2) $\lim_{m\ra\infty} \omega(x_m)=\infty$ for any sequence $\set{x_m}$ in $D$ such that either $x_m\ra p\in \partial D$ or $\abs{x_m}\ra\infty$ as $m\ra\infty$. 
\end{defn} %

Intuitively, a proper indicator for a compact set $A\subset D$, where $D\subset\Real^n$ is open, is a continuous function whose value equals zero if and only if on $A$ and approaches infinity at the boundary of $D$ or at infinity. It generalizes the idea of a radially unbounded function. 

\begin{rem}
Let $A\subset\Real^n$ be a compact set contained in an open set $D\subset\Real^n$. There is always a proper indicator for $A$ on $D$ defined by \cite[Remark 2]{teel2000smooth}
$$
\omega(x) = \max\left\{ \norm{x}_A,  \frac{1}{\norm{x}_{\Real^n\setminus D}} - \frac{2}{\text{dist}(A,\Real^n\setminus D)} \right\},
$$
where $\text{dist}(A,\Real^n\setminus D)=\inf_{x\in A}\norm{x}_{\Real^n\setminus D}$. Indeed, $\omega$ is clearly continuous. If $x\in A$, we have $\omega(x)=\norm{x}_A=0$. If $x\in D\setminus{A}$, we have $\omega(x)\ge \norm{x}_A>0$. For any $\set{x_m}$ in $D$ such that either $x_m\ra p\in \partial D$ or $\abs{x_m}\ra\infty$ as $n\ra\infty$, we either have $\norm{x_m}_A\ra\infty$ or $\frac{1}{\norm{x_m}_{\Real^n\setminus D}} \ra\infty$. 
\end{rem}

\begin{thm}\label{thm:conv1}
Suppose that $A$ is compact, $U$ is closed, and $A\cap U=\emptyset$. Then the following two statements are equivalent: 
\begin{enumerate}
    \item $\Sd$ satisfies the stability with safety guarantee specification  $(W,U,A)$.
    \item There exists an open set $D$ such that $(A\cup W)\subset D$ and $D\cap U=\emptyset$, a smooth function $V:\,D\ra\Real_{\ge 0}$ and class $\mathcal{K}_{\infty}$ functions $\alpha_1$ and $\alpha_2$ such that, for all $x\in D$ and $d\in\delta \B$, \begin{equation}\label{eq:lyap1}
    \alpha_1(\omega(x)) \le V(x) \le \alpha_2(\omega(x)),
\end{equation}
and
\begin{equation}\label{eq:lyap2}
    \nabla V(x)\cdot(f(x)+d) \le -V(x),
\end{equation}
{where $\omega$ be any proper indicator for $A$ on $D$}, 
\end{enumerate}
Moreover, the set $D$ can be taken as the following set
\begin{align}
\mathcal{W}_{\delta} = \big\{x\in \Real^n:\,\forall \phi\in \Pd(x), &  \lim_{t\ra\infty}\norm{\phi(t)}_{A}=0 \text{ and }  \notag\\
& \phi(t)\not\in U,\forall t\ge 0\big\}. \label{eq:winset}
\end{align}
\end{thm}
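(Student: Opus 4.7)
The direction (2) $\Rightarrow$ (1) is a direct verification, while (1) $\Rightarrow$ (2), together with the assertion that $D$ can be chosen to equal $\Wd$, is the substantive content. The overall strategy for the latter is to establish that $\Wd$ is open, forward invariant for $\Sd$, contains $A\cup W$, and is disjoint from $U$; then to invoke the smooth converse Lyapunov theorem for UAS of compact sets under differential inclusions from \cite{teel2000smooth} on $\Wd$, with $A$ as the attractor.

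For (2) $\Rightarrow$ (1), along any solution $\phi$ of $\Sd$ with $\phi(0)\in D$, inequality (\ref{eq:lyap2}) yields $\frac{d}{dt}V(\phi(t))\le -V(\phi(t))$ as long as $\phi$ remains in $D$, whence $V(\phi(t))\le V(\phi(0))e^{-t}$. The lower bound $\alpha_1(\omega(x))\le V(x)$ with $\omega$ a proper indicator forces every sublevel set of $V$ to be compactly contained in $D$, so $\phi$ is forward complete, avoids $U$ (since $D\cap U=\emptyset$), and satisfies $\omega(\phi(t))\to 0$ and hence $\norm{\phi(t)}_A\to 0$. Uniform stability of $A$ follows from the standard sandwich argument based on $\alpha_1$ and $\alpha_2$.

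For (1) $\Rightarrow$ (2), I take $D=\Wd$. The inclusions $W\subseteq\Wd$ and $A\subseteq\Wd$ follow directly from the hypothesis (using $A\cap U=\emptyset$ and UAS of $A$), while $\Wd\cap U=\emptyset$ is immediate from the definition. Forward invariance under $\Sd$ is a concatenation argument: any solution from $\phi(s)$ with $\phi\in\Pd(x_0)$ can be pasted onto $\phi|_{[0,s]}$ to produce a solution from $x_0\in\Wd$, which must converge to $A$ and avoid $U$. The substantive step is openness of $\Wd$. For $x_0\in\Wd$, the reachable set $\R_\delta(x_0)$ is precompact with $\overline{\R_\delta(x_0)}\cap U=\emptyset$: all trajectories eventually enter a small neighborhood of $A$ lying inside $\Real^n\setminus U$ (using $A\cap U=\emptyset$ and closedness of $U$), and their finite-time segments are bounded via Gronwall-type estimates for $F_\delta$. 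Combining this strict separation with upper semicontinuous dependence of $\Sd$-solutions on initial conditions over compact time intervals, and choosing a horizon $T$ after which solutions from $x_0$ are captured by a $\delta$-robustly forward invariant neighborhood of $A$ (whose existence is guaranteed by uniform stability), one produces a ball around $x_0$ contained in $\Wd$.

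With $\Wd$ open and forward invariant, $A$ compact and UAS for $\Sd$, $\Wd$ contained in the basin of attraction of $A$, and the set-valued map $F_\delta(x)=f(x)+\delta\B$ Lipschitz with nonempty compact convex values, the smooth converse Lyapunov theorem from \cite{teel2000smooth} applies on $\Wd$. For any prescribed proper indicator $\omega$ of $A$ on $\Wd$, it produces a smooth $V:\Wd\to\Real_{\ge 0}$ and class $\mathcal{K}_\infty$ functions $\alpha_1,\alpha_2$ satisfying (\ref{eq:lyap1}) and (\ref{eq:lyap2}) for all $x\in\Wd$ and $d\in\delta\B$. The proper-indicator bound automatically encodes the barrier behavior: $V$ blows up along $\partial\Wd$, so no sublevel set can touch $U$. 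I expect the hardest step to be the openness of $\Wd$, which demands pairing a quantitative safety margin with the semicontinuity of the flow in a way that remains robust to the nondeterminism of the differential inclusion.
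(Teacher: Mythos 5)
Your proposal follows essentially the same route as the paper: the direction (2) $\Rightarrow$ (1) by integrating (\ref{eq:lyap2}) and exploiting the proper-indicator bounds, and (1) $\Rightarrow$ (2) by taking $D=\Wd$, showing it is open, forward invariant, contains $A\cup W$ and avoids $U$ (the paper's Lemma \ref{lem:winset}), and then invoking the Teel--Praly converse Lyapunov construction on that forward invariant set (the paper's Proposition \ref{prop:clf}). The one ingredient you assert without argument is the existence of a single horizon $T$ valid for \emph{all} solutions in $\Pd(x_0)$ (and, later, the $\mathcal{KL}$ bound on $\Wd$), which the paper supplies through a separate uniformity-of-attraction proposition proved by compactness of the solution set; with that lemma in hand, your openness argument is the paper's.
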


Clearly, the set $\mathcal{W}_{\delta}$ defined above includes all initial states from which solutions of $\Sd$ will approach $A$ and avoid the unsafe set $U$. The following lemma establishes some basic properties of the set $\Wd$. The proof can be found in {Appendix A}. 

\begin{lem}\label{lem:winset}
Suppose that $A$ is compact, $U$ is closed, and $A\cap U=\emptyset$. If $\Sd$ satisfies a stability with safety guarantee specification $(W,U,A)$, then the set $\Wd$ is open, forward invariant, and satisfies $ W\subseteq \Wd\subseteq \Gd(A)$. 
\end{lem}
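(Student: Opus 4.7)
The plan is to treat the three claims separately, with the inclusions coming almost for free, forward invariance resting on the concatenation property of solutions of the differential inclusion, and openness being the real content of the lemma.

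For the inclusions $W\subseteq\Wd\subseteq\Gd(A)$, I would simply unfold definitions. The stability with safety specification gives $W\subseteq\Gd(A)$ and $\R_\delta(W)\cap U=\emptyset$, which together say exactly that every solution of $\Sd$ starting in $W$ is forward complete, converges to $A$, and never enters $U$; hence $W\subseteq\Wd$. The inclusion $\Wd\subseteq\Gd(A)$ is immediate from the definition of $\Wd$.

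For forward invariance, fix $x\in\Wd$, a solution $\phi\in\Pd(x)$, a time $s\ge 0$, and let $y=\phi(s)$. Given any $\psi\in\Pd(y)$, I would form the concatenation $\tilde\phi$ equal to $\phi$ on $[0,s]$ and $\psi(\cdot-s)$ on $[s,\infty)$; since $F_\delta$ is autonomous with convex, compact values, $\tilde\phi\in\Pd(x)$. Because $x\in\Wd$, $\tilde\phi$ avoids $U$ and tends to $A$, and so does its translate $\psi$. Hence $y\in\Wd$.

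Openness is the main obstacle, and my plan is to exploit upper semicontinuity of the solution map for $F_\delta(x)=f(x)+\delta\B$ together with the UAS of $A$. First, let $\rho>0$ be the radius in the uniform attractivity clause of Definition \ref{def:stability}; by uniform stability choose $\eta\in(0,\rho)$ small enough that any solution of $\Sd$ starting within $\eta$ of $A$ stays inside $A+r\B$ forever, where $r<\mathrm{dist}(A,U)$ is chosen so $A+r\B\cap U=\emptyset$. The crucial intermediate claim is that for any $x_0\in\Wd$ there is a uniform time $T^\ast$ with $\R_\delta^{T^\ast}(x_0)\subseteq A+(\eta/2)\B$. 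I would argue this by contradiction: otherwise one can extract a sequence of solutions $\phi_n\in\Pd(x_0)$ and times $t_n\to\infty$ with $\phi_n(t)\notin A+(\eta/2)\B$ on $[0,t_n]$, then pass to a subsequence converging uniformly on compacta (Arzelà–Ascoli, since $f$ is locally Lipschitz and the disturbances are uniformly bounded) to some limit $\phi\in\Pd(x_0)$, using closedness of the solution set of the Lipschitz convex-valued inclusion; but $\phi(t)\to A$ by $x_0\in\Wd$, which forces $\phi(T)\in A+(\eta/2)\B$ for some finite $T$, contradicting the choice of the $\phi_n$.

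Given this uniform entry time, I would conclude openness as follows. The set $\R_\delta^{0\le t\le T^\ast}(x_0)$ is compact and disjoint from the closed set $U$, and the map $x\mapsto \R_\delta^{0\le t\le T^\ast}(x)$ is upper semicontinuous at $x_0$; so there is a neighborhood $N$ of $x_0$ such that every solution starting in $N$ stays out of $U$ on $[0,T^\ast]$ and lies inside $A+\eta\B$ at time $T^\ast$. From time $T^\ast$ onward, uniform stability (with the choice of $\eta$) keeps such a solution in $A+r\B\subseteq\Real^n\setminus U$ and uniform attractivity drives it into $A$. Thus $N\subseteq\Wd$, proving openness. The only technical ingredient that deserves more than a sentence of care is the upper-semicontinuity/Arzelà–Ascoli step underlying the uniform entry time; everything else is bookkeeping with the UAS constants.
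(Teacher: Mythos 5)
Your proposal is correct and takes essentially the same route as the paper: your uniform-entry-time claim is precisely the paper's uniformity-of-attraction result (Proposition \ref{prop:uniform}) specialized to the singleton $\set{x_0}$, proved by the same compactness-of-solution-sets contradiction, and your openness argument via a compact reachable tube on $[0,T^\ast]$ disjoint from $U$, continuous dependence for a neighborhood $N$, and then uniform stability and attractivity after $T^\ast$ mirrors the paper's proof of Lemma \ref{lem:winset}. One small bookkeeping fix: negating the uniform-entry claim only gives $\phi_n(t_n)\notin A+(\eta/2)\B$, so the whole-interval exclusion should be stated with the stability margin $\delta_{\eta/2}$ (a solution that ever enters $A+\delta_{\eta/2}\B$ stays in $A+(\eta/2)\B$ thereafter), exactly as in the paper's proof of Proposition \ref{prop:uniform}; the contradiction with the limiting solution then goes through unchanged.
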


The proof of Theorem \ref{thm:conv1} relies on the following result, which states that, on any forward invariant open subset $D$ of $\Gd(A)$, we can find a ``global'' Lyapunov function relative to $D$. 

\begin{prop}\label{prop:clf}
Let $A\subset\Real^n$ be a compact set that is UAS for $\Sd$. Let $D\subset\Real^n$ be an open set such that $A\subset D\subset \Gd(A)$ and $D$ is forward invariant for $\Sd$, where  $\Gd(A)$ is the domain of attraction of $A$ for $\Sd$. Let $\omega$ be any proper indicator for $A$ on $D$. Then there exists a smooth function $V:\,D\ra\Real_{\ge 0}$ and class $\mathcal{K}_{\infty}$ functions $\alpha_1$ and $\alpha_2$ such that conditions (\ref{eq:lyap1}) and (\ref{eq:lyap2}) hold for all $x\in D$ and $d\in\delta \B$. 
\end{prop}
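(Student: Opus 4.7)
The plan is to reduce Proposition \ref{prop:clf} to the smooth converse Lyapunov theorem for differential inclusions due to Teel and Praly \cite{teel2000smooth}, which produces, on an open set $\Omega$ containing a compact set $A$, a smooth $V$ satisfying (\ref{eq:lyap1})--(\ref{eq:lyap2}) from a single hypothesis: a class-$\mathcal{KL}$ estimate
\[
\omega(\phi(t)) \le \beta(\omega(\phi(0)), t)
\]
in a proper indicator $\omega$ of $A$ on $\Omega$, for every solution $\phi$ of the differential inclusion starting in $\Omega$. Because $F_\delta(x) = f(x) + \delta \B$ is upper semicontinuous with nonempty compact convex values (as $f$ is locally Lipschitz), the regularity required by \cite{teel2000smooth} is automatic. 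The task therefore reduces to producing such a $\mathcal{KL}$ estimate on $D$ with the supplied proper indicator $\omega$.

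To construct $\beta$, I would proceed in three steps. First, forward invariance of $D$ keeps $\phi(t)$ in the domain of $\omega$ for all $t \ge 0$, so $\omega(\phi(t))$ is well defined. Second, the uniform-stability and uniform-attractivity clauses of Definition \ref{def:stability} yield, for any $\eta > 0$, constants $\rho > 0$ and $T_\eta \ge 0$ such that solutions of $\Sd$ entering $A + \rho \B$ are confined to $A + \eta \B$ for all $t \ge T_\eta$. Third, and most importantly, on each sublevel set $L_r = \set{x \in D : \omega(x) \le r}$, which is compact in $D$ by the definition of a proper indicator and contained in $\Gd(A)$, pointwise attraction to $A$ must be promoted to attraction uniform in $x_0 \in L_r$ and $\phi \in \Pd(x_0)$. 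Since every solution from $L_r$ eventually enters $A + \rho \B$, upper semicontinuous dependence of $\Pd(\cdot)$ on initial conditions (a standard consequence of $f$ being locally Lipschitz) lets one uniformize the entrance time on a neighborhood of each $x_0 \in L_r$; a finite subcover of $L_r$ then yields a uniform hitting time $T(r)$. Composing this with the local UAS bound and the uniform continuity of $\omega$ on the compact neighborhood $A + \rho \B$ of $A$ gives a monotone uniform bound on $\omega(\phi(t))$, which can be majorized by a class-$\mathcal{KL}$ function $\beta$ via standard constructions.

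The main obstacle is this third step: uniform attraction on compact subsets of $D$ does not follow directly from UAS, which gives uniformity only in a fixed neighborhood of $A$. Extending to arbitrary compact subsets of $D \subseteq \Gd(A)$ requires the continuous-dependence machinery for differential inclusions generated by a locally Lipschitz $f$, together with compactness of the finite-time reachable set $\R_\delta^{0\le t\le T}(L_r)$. Once the $\mathcal{KL}$ estimate is in hand, invoking \cite{teel2000smooth} completes the construction of the smooth $V$ on $D$ satisfying (\ref{eq:lyap1}) and (\ref{eq:lyap2}) for every $x \in D$ and $d \in \delta \B$.
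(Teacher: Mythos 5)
Your proposal follows essentially the same route as the paper's proof: reduce Proposition \ref{prop:clf} to a $\mathcal{KL}$-estimate in the proper indicator $\omega$ on $D$ (promoting pointwise attraction to attraction uniform on compact sublevel sets of $\omega$ via continuous dependence and compactness of solution/reachable sets), and then invoke the Teel--Praly machinery \cite{teel2000smooth} (Sontag's $\mathcal{KL}$ lemma, the supremum-over-solutions construction of $V$, and smoothing) to obtain (\ref{eq:lyap1})--(\ref{eq:lyap2}). The only difference is in how uniformity of attraction is obtained -- the paper argues by contradiction with a diagonal-subsequence extraction (Proposition \ref{prop:uniform}, using Lemma \ref{lem:compact}) while you sketch a finite-cover argument -- but both rest on the same compactness facts and yield the same intermediate statement.
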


This proposition can be proved by combining the proof for Proposition 3 and the statements of Theorem 2 and Theorem 1 in \cite{teel2000smooth}. The main difference being that the results in \cite{teel2000smooth} are stated for more general differential inclusions and Proposition 3 in \cite{teel2000smooth} is proved on the \textit{whole domain of attraction} of $A$, whereas the above results are for specific $\delta$-perturbations of a Lipschitz ordinary differential equation and for any open forward invariant set containing the set $A$. {Due to this subtlety, Proposition 3 of \cite{teel2000smooth} is not directly applicable for our purpose.}
For completeness, we provide a more direct proof of this result in Appendix B.

\textbf{Proof of Theorem \ref{thm:conv1}} 

We first prove (2) $\Longrightarrow$ (1). The fact that $V$ is a smooth Lyapunov function, i.e., satisfying conditions (\ref{eq:lyap1}) and (\ref{eq:lyap2}), on an open neighborhood $D$ containing $A$ shows that $A$ is UAS for $\Sd$. We show that the set $D$ is forward invariant. Let $x_0\in D$. Then for any $\phi\in\Pd(x_0)$, we have 
$$
\frac{dV(\phi(t))}{dt}=\nabla V(\phi(t))\cdot (f(\phi(t)) + d(t))\le 0
$$
{holds for almost all $t\ge 0$.} It follows that $V(\phi(t))\le V(x_0)<\infty$. Hence $\phi(t)$ is bounded,  defined, and satisfies $\phi(t)\in D$ for all $t\ge 0$. By forward invariance of $D$ and $W\subset D$, we have $\Rd(W)\subset D$ and $\Rd(W)\cap U=\emptyset$. It remains to show that $W\subset \Gd(A)$. For any $x_0\in W$ and any $\phi\in\Pd(x_0)$, we have $\phi(t)\in D$ for all $t\ge 0$. Hence  
$$
\frac{dV(\phi(t))}{dt}=\nabla V(\phi(t))\cdot (f(\phi(t)) + d(t))\le - V(\phi(t))<0
$$
as long as $\phi(t)\not\in A$. A standard Lyapunov argument shows that $\norm{\phi(t)}_A\ra 0$ as $t\ra \infty$. Hence $x_0\in \Gd(A)$ and $W\subset \Gd(A)$. We have verified that $\Sd$ satisfies a stability with safety guarantee specification $(W,U,A)$. 

We then prove (1) $\Longrightarrow$ (2). By Lemma \ref{lem:winset}, we can let $D=\Wd$. Then $(A\cup W)\subset D\subset \Gd(A)$. Furthermore, $D$ is open and forward invariant. The conclusion follows from that of  Proposition \ref{prop:clf}. \hfill $\BlackBox$

\begin{rem}
Compared with related results on sufficient Lyapunov conditions for stability with safety guarantees (e.g.,  \cite{romdlony2016stabilization,braun2017existence,braun2019complete}), to the best knowledge of the authors, Theorem \ref{thm:conv1} provides the first converse Lyapunov-barrier theorem and we show that the converse Lyapunov function is defined on whole set of initial conditions from which asymptotic stability with safety guarantees is {satisfied}. In other words, we provide a Lyapunov characterization of the problem of asymptotic stability with safety guarantees. We also note that several converse barrier functions have been reported in the literature \cite{wisniewski2015converse,ratschan2018converse,liu2020converse}. In particular, the recent work \cite{liu2020converse} makes a connection between converse Lyapunov function and converse barrier function via a robustness argument, which, to some extent, inspired our work in this paper to unify converse Lyapunov and barrier functions. {\color{black} The results of this paper significantly differ from that in \cite{liu2020converse}, because converse results are established for both stability with safety guarantees and reach-avoid-stay specifications, whereas the results in \cite{liu2020converse} only concern safety. We achieved this non-trivial extension by adapting converse Lyapunov theorems (e.g., \cite{teel2000smooth}), as in Proposition \ref{prop:clf}, to work with safety requirements, enabled by characterizing all initial states from which solutions will satisfy stability with safety guarantees, as in  Lemma \ref{lem:winset}.} 
\end{rem}

While Theorem \ref{thm:conv1} gives a single smooth Lyapunov function satisfying the strong set of conditions (\ref{eq:lyap1}) and (\ref{eq:lyap2}), we propose the following set of sufficient conditions for two reasons. First, they appear to be weaker (although in fact theoretically equivalent in view of Theorem \ref{thm:conv1}) and perhaps easier to verify in practice {\cite{meng2021control}}. Second, they agree with the notions of Lyapunov and barrier functions commonly seen in the literature. 

\begin{prop}%
\label{prop:lyap-barrier}
Suppose that $A$ is compact, $U$ is closed, and $A\cap U=\emptyset$. If there exists an open set $D$ such that $(A\cup W)\subset D$ and smooth functions $V:\,D\ra\Real_{\ge 0}$ and $B:\,D\ra\Real$ such that 
\begin{enumerate}
    \item $V$ is positive definite on $D$ w.r.t. A, i.e., $V(x)=0$ if and only if $x\in A$; 
    \item $\nabla V(x)\cdot (f(x)+d)<0$ for all $x\in D\setminus A$ and $d\in\delta\B$; 
    \item $W\subset C=\set{x\in D:\,B(x)\ge 0}$ and $B(x)<0$ for all $x\in U$; 
    \item $\nabla B(x)\cdot (f(x)+d)\ge 0$ for all $x\in D$ and $d\in\delta\B$,
\end{enumerate}
then $\Sd$ satisfies the stability with safety guarantee specification  $(W,U,A)$. Furthermore, if $W$ is compact, then conditions (1)--(4) are also necessary for $\Sd$ to satisfy the stability with safety guarantee specification $(W,U,A)$. 
\end{prop}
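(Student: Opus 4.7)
The plan is to handle the two implications separately: sufficiency by a direct Lyapunov-barrier argument along solutions, and necessity (under compact $W$) by invoking Theorem \ref{thm:conv1} and constructing the barrier $B$ from the converse Lyapunov function $V$.

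For sufficiency, assuming (1)--(4), I would first observe that along any solution $\phi$ of $\Sd$ contained in $D$, condition (4) gives $\frac{d}{dt} B(\phi(t)) \ge 0$, so $B\circ\phi$ is non-decreasing and the set $C = \set{x \in D : B(x) \ge 0}$ is forward invariant for $\Sd$. Using condition (3), $W \subset C$ and $C \cap U = \emptyset$, so any solution starting in $W$ stays in $C$ and never meets $U$, which yields the safety conclusion $\Rd(W) \cap U = \emptyset$. For the stability part, conditions (1)--(2) make $V$ a smooth Lyapunov function for $A$ on $D$: positive definiteness together with strict decrease give UAS of $A$ by the standard neighborhood argument, and along any solution from $W$ (which remains in $C \subset D$) the strict inequality $\nabla V(\phi(t)) \cdot (f(\phi(t)) + d(t)) < 0$ combined with positive definiteness w.r.t.\ the compact set $A$ forces $V(\phi(t)) \to 0$ and hence $\norm{\phi(t)}_A \to 0$, placing $W \subset \Gd(A)$.

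For necessity with $W$ compact, I would invoke Theorem \ref{thm:conv1} to obtain the open forward-invariant set $\Wd$ defined in (\ref{eq:winset})---which contains $A \cup W$ and is disjoint from $U$---together with a smooth $V : \Wd \to \Real_{\ge 0}$ satisfying (\ref{eq:lyap1}) and (\ref{eq:lyap2}). Taking $D = \Wd$, conditions (1) and (2) are immediate: (1) from $\alpha_1(\omega(x)) \le V(x) \le \alpha_2(\omega(x))$ together with the fact that the proper indicator $\omega$ vanishes exactly on $A$, and (2) from $\nabla V(x)\cdot(f(x)+d) \le -V(x) < 0$ for $x \in D \setminus A$. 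For the barrier, compactness of $W$ and continuity of $V$ give $M := \max_{x \in W} V(x) < \infty$, and I would set $B(x) := M + 1 - V(x)$ on $D$. Then $B \ge 1$ on $W$ yields $W \subset C = \set{B \ge 0}$, and $\nabla B(x) \cdot (f(x) + d) = -\nabla V(x) \cdot (f(x) + d) \ge V(x) \ge 0$ verifies (4); since $U \cap D = U \cap \Wd = \emptyset$, the required $C \cap U = \emptyset$ holds automatically.

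The main obstacle I anticipate is the interpretation of the clause ``$B(x) < 0$ for all $x \in U$'' in the necessity direction. Read literally it demands $U \subset D$, but then condition (2) would force $\nabla V \cdot (f+d) < 0$ pointwise on $U \setminus A$, a condition that can fail at any equilibrium of $f$ inside $U$ (taking $d = 0 \in \delta\B$) for every choice of $V$. The natural reading---which is what the converse construction can deliver---is that $\set{B \ge 0} \cap U = \emptyset$, equivalently $B < 0$ on $U \cap D$; this preserves the safety content of the barrier and is trivially satisfied by the construction $D = \Wd$ above. An enlargement of $D$ by a small open neighborhood of $U$ separated from a compact forward-invariant sublevel set of $V$ containing $W$ could be attempted, but defining $V$ on such a neighborhood so as to meet the strict decrease in (2) is not generally possible, so the $D = \Wd$ choice is the clean one.
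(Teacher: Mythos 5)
Your proof is correct and follows essentially the same route as the paper: sufficiency via forward invariance of $C$ (monotonicity of $B$ along solutions) together with a standard Lyapunov argument for UAS and attraction, and necessity by invoking Theorem \ref{thm:conv1} with $D=\Wd$ and taking $B$ to be a constant minus the converse Lyapunov function (the paper uses $B=c-V$ with $c=\sup_{K\cup W}V$ over a compact neighborhood $K$ of $A$, which plays the same role as your $M+1$). Your reading of the clause ``$B(x)<0$ for all $x\in U$'' as a condition on $U\cap D$ matches what the paper's construction implicitly does, since its $B$ is likewise defined only on $D=\Wd$, which is disjoint from $U$.
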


\begin{proof}
We first prove the sufficiency part. Conditions (1)--(2) state that $V$ is a local Lyapunov function for $\Sd$ w.r.t. $A$. Hence $A$ is UAS for $\Sd$. Conditions (3)--(4) state that $B$ is a barrier function for $\Sd$ w.r.t. $(W,U)$. 

We can easily show that the set $C=\set{x\in D:\,B(x)\ge 0}$ is forward invariant. Indeed, if $C$ is not forward invariant, then there exists some $x_0\in C$, a solution $\phi\in\Pd(x_0)$, and some $\tau>0$ such that $B(\phi(\tau))<0$. Define 
$$
\overline{t}=\sup\{t\ge 0:\,\phi(t)\in C\}.
$$
Then $\overline{t}$ is well defined and finite. By continuity of $B(\phi(t))$, we have $B(\phi(\overline{t}))=0$. Since $\phi(\overline{t})\in D$ and $D$ is open, for $\eps>0$ sufficiently small, we have $\phi(t)\in D$ for \ymmark{almost all} $t\in [\overline{t},\overline{t}+\eps]$. This implies that, \ymmark{for almost all $ t\in[\overline{t},\overline{t}+\eps]$, }
$$
\frac{dB(\phi(t))}{dt}=\nabla B(\phi(t))\cdot (f(\phi(t)) + d(t))\ge 0.
$$ 
Hence we have $B(\phi(t))\ge B(\phi(\overline{t}))=0$ for \ymmark{almost all} $t\in [\overline{t},\overline{t}+\eps]$. This contradicts the definition of $\overline{t}$. Hence $C$ must be forward invariant. Since $W\subset C$ and $C\cap U=\emptyset$, we have $\Rd(W)\subset C$ and $\Rd(W)\cap U=\emptyset$. 

It remains to show that $W\subset \Gd(A)$. For any $x_0\in W$ and any $\phi\in\Pd(x_0)$, we have $\phi(t)\in C\subset D$ for \ymmark{almost} all $t\ge 0$. Hence  
$$
\frac{dV(\phi(t))}{dt}=\nabla V(\phi(t))\cdot (f(\phi(t)) + d(t))<0
$$
as long as $\phi(t)\not\in A$. A standard Lyapunov argument shows that $\norm{\phi(t)}_A\ra 0$ as $t\ra \infty$.

We then prove the necessity part. Since $A$ is compact, there exists a compact neighborhood $K$ of $A$ such that $A\subset K\subset D$. Let $c=\sup_{x\in K\cup W} V(x)$. Then $c>0$. Define $B(x)=c-V(x)$ for $x\in D$. We can easily verify that $V(x)$ and $B(x)$ satisfy conditions (1)--(4). 
\end{proof}

\begin{rem}\label{rem:ref29}
{\color{black} We compare the Lyapunov-barrier conditions with that in \cite{romdlony2016stabilization}, which provided a novel control framework for stabilization with guaranteed safety for nonlinear systems. Nonetheless, we restrict the formulation to autonomous systems (cf. Proposition 1 in \cite{romdlony2016stabilization}). This is without loss of generality, because the control framework in \cite{romdlony2016stabilization} is fundamentally built upon the conditions for autonomous systems, as clearly indicated in \cite{romdlony2016stabilization} (see, e.g., the remark before and proof of  \cite[Theorem 3]{romdlony2016stabilization}). We also change the notion slightly to be consistent with the notation used in this paper.}
In \cite{romdlony2016stabilization}, a set of sufficient conditions for a smooth function $V:\,\Real^n\ra {\mathbb{R}}$ to be called a Lyapunov-barrier function for the system (\ref{eq:sys}) with respect to the origin and an unsafe set $U$ were formulated as follows: 
\begin{enumerate}
    \item[(i)] $V$ is lower-bounded and radially unbounded;
    \item[(ii)] $V(x)>0$ for all $x\in U$;
    \item[(iii)] $\nabla V(x)\cdot f(x)<0$ for all $x\in \Real^n\setminus(U\cup \set{0})$; and 
    \item[(iv)] $\overline{\Real^n\setminus (U\cup C)} \cap \overline{U}=\emptyset$, where the set $C$ is given by $C=\set{x\in\Real^n:\,V(x)\le 0}$. 
\end{enumerate}
In \cite{braun2017existence}, it is shown that the above conditions imply the set $U$ is necessarily unbounded. Here we show another property that indicates the restrictive nature of condition (iv); that is, 
\begin{equation}\label{eq:iv}
x\in \partial U \text{ implies } V(x)=0. 
\end{equation}
In fact, suppose that this is not the case, then $V(x)>0$. There exists a sequence $\set{x_n}\ra x\in\partial D$ such that $V(x_n)>0$ (and hence $\set{x_n}\cap C=\emptyset$) and $\set{x_n}\cap U=\emptyset$ (this is possibly because $x\in\partial U$). Hence $\set{x_n}\subset\Real^n\setminus(U\cup C)$. It follows that $x\in \overline{\Real^n\setminus (U\cup C)}$. By condition (iv) above, $x\not\in \overline{U}$, which contradicts $x\in\partial U$. In view of (\ref{eq:iv}), condition (iv) above is somewhat restrictive, because it implies that the boundary of the unsafe set $U$ lies entirely on a level curve of $V$. %
\end{rem}

\begin{rem}
Figure \ref{fig:sets} provides an illustration of the sets defined for proving Theorem \ref{thm:conv1}.  
\begin{figure}[h]
\centering
\includegraphics[width=0.5\textwidth]{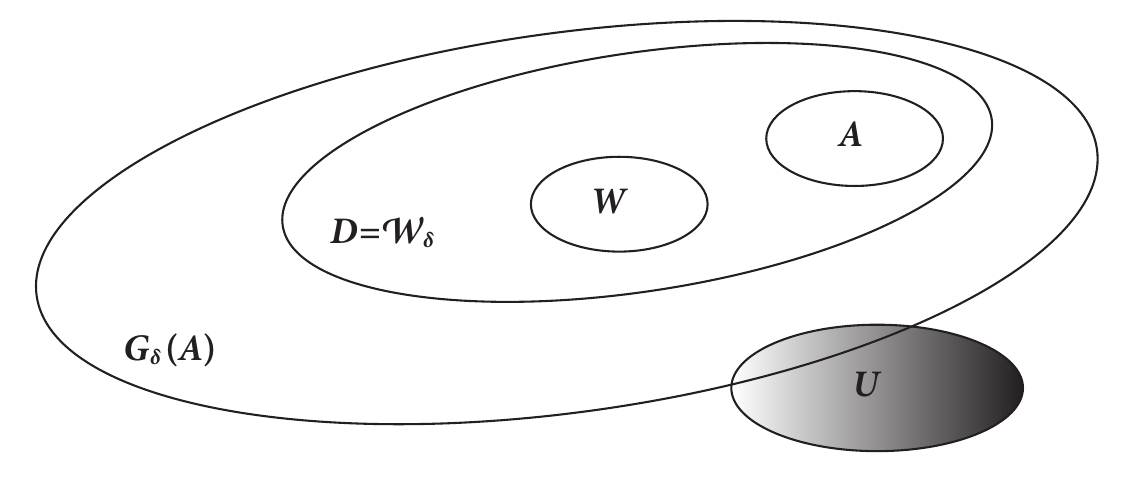}
\caption{An illustration of the sets involved in Theorem \ref{thm:conv1}, Lemma \ref{lem:winset}, and Proposition \ref{prop:clf}. While the domain of attraction $\Gd(A)$ can potentially intersect with the unsafe set $U$, the winning set $\Wd$ defined in (\ref{eq:winset}) characterizes the set of initial conditions from which the stability with safety constraints is satisfied. Clearly, the system $\Sd$ satisfies a stability with safety specification $(W,U,A)$ if and only if $W\subset \Wd$. Theorem \ref{thm:conv1} (together with Lemma \ref{lem:winset} and Proposition \ref{prop:clf}) states that a smooth Lyapunov function can be found on the set $D=\Wd$ to verify the specification $(W,U,A)$. 
}\label{fig:sets}
\end{figure}
\end{rem}

\section{Converse Lyapunov-Barrier Function for Reach-Avoid-Stay Specifications} \label{sec:reach}

The converse results proved in the previous section can be extended to reach-avoid-stay specifications under some mild modifications. 

Suppose that $\Sd$ satisfies a reach-avoid-stay specification $(W,U,\Omega)$. 

\begin{lem}\label{lem:A}
Suppose that $\Omega$ is compact and $W$ is nonempty. If $\Sd$ satisfies a reach-avoid-stay specification $(W,U,\Omega)$, then the set 
\begin{equation}\label{eq:A}
A = \set{x\in\Omega:\,\forall\phi\in\Pd(x),\phi(t)\in\Omega,\forall{t\ge 0}}. 
\end{equation}
is a nonempty compact invariant set for $\Sd$. 
\end{lem}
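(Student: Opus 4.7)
The plan is to verify the three properties of $A$ separately, relying throughout on two standard tools: the \emph{concatenation property} for solutions of $\Sd$ (if $\phi$ is a solution on $[0,t]$ ending at $y$ and $\psi$ is a solution starting at $y$, their concatenation is again a solution of $\Sd$), and \emph{continuous dependence} in the sense that any solution of $\Sd$ can be written as $\phi'(t)=f(\phi(t))+d(t)$ for some measurable $d:[0,\infty)\to\delta\B$, so that for the \emph{same} $d$, the solution of this non-autonomous ODE depends continuously on the initial condition via Gronwall's lemma.

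For nonemptiness, the plan is to pick any $x_0\in W$ (nonempty by assumption) and any $\phi\in\Pd(x_0)$; let $T\ge 0$ come from the reach-and-stay condition so that $\phi(t)\in\Omega$ for all $t\ge T$. I claim $y:=\phi(T)\in A$. Indeed $y\in\Omega$, and for any $\psi\in\Pd(y)$ the concatenation of $\phi|_{[0,T]}$ with $\psi$ is in $\Pd(x_0)$, so by the reach-and-stay property it remains in $\Omega$ for all times beyond $T$, which forces $\psi(s)\in\Omega$ for all $s\ge 0$; hence $y\in A$.

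For closedness (and therefore compactness, since $A\subset\Omega$ is automatic and $\Omega$ is compact), I take $x_n\to x$ with $x_n\in A$ and an arbitrary $\phi\in\Pd(x)$. Writing $\phi'(t)=f(\phi(t))+d(t)$ for a measurable selection $d$, I define $\phi_n$ as the solution of $x'=f(x)+d(t)$ with $\phi_n(0)=x_n$. The Lipschitz property of $f$ together with Gronwall's inequality gives $\phi_n(t)\to\phi(t)$ uniformly on compact subintervals of the maximal interval of existence of $\phi$; since $\phi_n\in\Pd(x_n)$ and $x_n\in A$, each $\phi_n(t)$ lies in the closed set $\Omega$, and passing to the limit yields $\phi(t)\in\Omega$ throughout that interval. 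Boundedness in $\Omega$ then lets $\phi$ be extended to $[0,\infty)$, and we conclude $x\in A$. This step is the main obstacle, because it is the only place where a genuine analytic tool (continuous dependence of solutions under a common disturbance) is required and where one must be careful about the maximal interval of existence rather than blithely quoting forward completeness.

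For forward invariance, given $x_0\in A$, $\phi\in\Pd(x_0)$, and $t\ge 0$, I have $\phi(t)\in\Omega$ by definition of $A$. To show $\phi(t)\in A$, I pick an arbitrary $\psi\in\Pd(\phi(t))$ and concatenate $\phi|_{[0,t]}$ with $\psi$; the concatenation lies in $\Pd(x_0)$, and because $x_0\in A$ it stays in $\Omega$ for all positive time, forcing $\psi(s)\in\Omega$ for every $s\ge 0$. This yields $\phi(t)\in A$ and completes the proof.
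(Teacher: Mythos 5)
Your proof is correct and follows essentially the same route as the paper: nonemptiness from the reach-and-stay property together with concatenation of solutions, closedness via continuous dependence of solutions under a common disturbance signal (the paper phrases this as a contradiction argument, producing solutions $\phi_m\in\Pd(x_m)$ converging uniformly on $[0,\tau]$ to a solution of $\Pd(x)$ that exits $\Omega$), and forward invariance by concatenation. Your explicit care with the maximal interval of existence and the Gronwall estimate only fills in details the paper leaves implicit.
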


\begin{proof}
We first show that $A$ is nonempty. By the definition of reach-avoid-stay specification $(W,U,\Omega)$, solutions of $\Sd$ starting from $W$ are forward complete and there exists some $T\ge 0$ such that $\Rd^{t\ge T}(W)\subset\Omega$. It is easy to verify that the set $\Rd^{t\ge T}(W)$ is forward invariant for $\Sd$. Clearly, $\Rd^{t\ge T}(W)\subset A$ and $A$ is nonempty. 

We next show that $A$ is compact. Since $A\subset\Omega$ and $\Omega$ is compact, we only need to show that $A$ is closed. Note that $A$ is forward invariant by definition. Let $\set{x_m}$ be a sequence in $A$ that converges to $x$. Since $\Omega$ is compact, we have $x\in \Omega$. Suppose that $x\not\in A$. Then there exists some $\phi\in\Pd(x)$ and some $\tau>0$ such that $\phi(\tau)\not\in\Omega$. By continuous dependence of solutions of $\Sd$ on initial conditions, there exists a sequence of solutions $\phi_m\in\Pd(x_m)$ that converges to $\phi$ uniformly on $[0,\tau]$. We have $\phi_m(\tau)\ra\phi(\tau)\not\in\Omega$ as $m\ra\infty$. Since $\Real^n\setminus\Omega$ is open, this implies that for $m$ sufficiently large, $\phi_m(\tau)\not\in\Omega$. This contradicts the definition of $A$ (recall that $x_m\in A$ and $\phi_m\in\Pd(x_m)$). Hence $x\in A$ and $A$ is compact. %
\end{proof}

The following proposition states that any compact robustly invariant set of $\Sd$ is {UAS} for $\S_{\delta'}$, where $\delta'$ can be taken to be arbitrarily close to $\delta$. This fact was essentially proved in \cite{liu2020converse} in a slightly different context. The conclusion does not hold for $\delta'=\delta$ (see Example \ref{ex:uas}). 

\begin{prop}\label{prop:uas}
Any nonempty compact invariant set $A$ for $\Sd$ is UAS for $\S_{\delta'}$ whenever $\delta'\in[0,\delta)$. 
\end{prop}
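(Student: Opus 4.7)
The plan is to prove an exponential stability estimate for $\S_{\delta'}$ near $A$ by exploiting the slack $\delta-\delta'>0$ through a high-gain tracking construction. Since $A$ is compact and $f$ is locally Lipschitz, I would first fix a compact neighbourhood $A+\eta\B$ on which $f$ admits some Lipschitz constant $L$, and then choose a gain $K>L$ together with a radius $r_0\in(0,\eta]$ small enough that $Kr_0\le \delta-\delta'$.

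Given $x_0$ with $\norm{x_0}_A < r \le r_0$, by compactness of $A$ pick $y_0\in A$ realising $\abs{x_0-y_0}=\norm{x_0}_A$. For any solution $\phi$ of $\S_{\delta'}$ starting from $x_0$ with (measurable) disturbance $d$, I would define a companion trajectory $\psi$ by the ODE
\[
\psi'(t) = f(\psi(t)) + d(t) + K\bigl(\phi(t)-\psi(t)\bigr),\qquad \psi(0)=y_0.
\]
The key observation is that whenever $\abs{\phi(t)-\psi(t)}\le r_0$, the effective input $e(t):=d(t)+K(\phi(t)-\psi(t))$ satisfies $\abs{e(t)}\le \delta'+Kr_0\le \delta$, so $\psi$ is a genuine $\S_\delta$-trajectory and, by invariance of $A$, stays in $A$. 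A standard Gronwall estimate applied to $\abs{\phi-\psi}^2$, using the Lipschitz bound on $f$, then yields the decay $\abs{\phi(t)-\psi(t)}\le r\,e^{-(K-L)t}$, and since $\psi(t)\in A$ this translates directly into $\norm{\phi(t)}_A\le r\,e^{-(K-L)t}$. From this single exponential estimate, uniform stability (with $\delta_\eps=\min\{\eps,r_0\}$) and uniform attractivity (with radius $\rho=r_0$) both follow.

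The main technical subtlety is the circularity in the construction: the defining ODE for $\psi$ depends on $\phi$, while the decay bound on $\phi$ requires $\psi\in A$. I would resolve this with a continuation/bootstrap argument, letting $\tau$ be the supremum of times on which $\abs{\phi(s)-\psi(s)}\le r_0$ persists throughout $[0,\tau]$; the exponential estimate gives a strict improvement $r\,e^{-(K-L)\tau}<r_0$ at $\tau$, so by continuity $\tau$ cannot occur strictly within the maximal existence interval of $\phi$. The fact that $\psi(t)\in A$ (compact) together with $\abs{\phi-\psi}\le r_0$ confines $\phi$ to a compact set and thereby rules out finite-time blow-up, giving forward completeness on $[0,\infty)$ and closing the argument.
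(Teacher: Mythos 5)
Your proof is correct, but it takes a genuinely different route from the paper's. The paper proves Proposition \ref{prop:uas} by invoking the reachability result quoted as Lemma \ref{lem:control} (from the cited work on converse barrier functions): any point within $r$ of the start and any point within $r$ of the end of a $\S_{\delta'}$-trajectory confined to a compact set can be joined by a $\Sd$-trajectory over the same time horizon. UAS then follows by contradiction: if a $\S_{\delta'}$-solution starting near $A$ escaped an $\eps$-neighbourhood of $A$ (or failed to converge), one could manufacture a $\Sd$-solution starting in $A$ that leaves $A$, violating invariance; in the attractivity part this even yields finite-time absorption ($\phi(t)\in A$ for all $t\ge 1$ when $\norm{\phi(0)}_A<\rho$). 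You instead replace the cited lemma by an explicit high-gain tracking construction: the companion trajectory $\psi$ is a genuine $\Sd$-solution as long as $\abs{\phi-\psi}\le r_0$ (since $\abs{d(t)+K(\phi(t)-\psi(t))}\le \delta'+Kr_0\le\delta$), hence remains in $A$ by invariance, and the Gronwall estimate gives the quantitative bound $\norm{\phi(t)}_A\le \norm{\phi(0)}_A e^{-(K-L)t}$ near $A$, from which uniform stability and uniform attractivity follow immediately. Conceptually both proofs exploit the same slack $\delta-\delta'$ (the cited lemma is itself of a tracking nature), but yours is self-contained, constructive, and produces an explicit local exponential rate, whereas the paper's is shorter given the external lemma and delivers the stronger qualitative conclusion of entry into $A$ in finite time. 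Your bootstrap resolution of the circularity is the right way to close the argument; the two points worth writing out carefully are (i) that forward invariance of $A$ applies to the partial trajectory $\psi$ on $[0,\tau)$ --- if it ever left $A$, extend that piece to a maximal $\Sd$-solution to contradict invariance --- and (ii) that on the bootstrap interval both $\psi(t)\in A$ and $\phi(t)\in A+r_0\B$ lie in the compact set $A+\eta\B$ on which the Lipschitz constant $L$ of $f$ is taken, which is what legitimizes the estimate $\left(\phi-\psi\right)\cdot\left(f(\phi)-f(\psi)\right)\le L\abs{\phi-\psi}^2$.
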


The proof relies on the following technical lemma from \cite{liu2020converse}. 

\begin{lem}\cite{liu2020converse}\label{lem:control}
Fix any $\delta'\in(0,\delta)$ and $\tau>0$. Let $K\subset\Real^n$ be a compact set. Then there exists some $r=r(K,\tau,\delta',\delta)>0$ such that the following holds: if there is a solution $\phi$ of $\S_{\delta'}$ such that $\phi(s)\in K$ for all $s\in[0,T]$, where $T\ge\tau$, then for any $y_0\in \phi(0)+r\B$ and any $y_1\in \phi(T)+r\B$, we have $y_1\in \R_{\delta}^T(y_0)$, i.e., $y_1$ is reachable at $T$ from $y_0$ by a solution of $\Sd$. 
\end{lem}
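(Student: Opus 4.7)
The plan is to construct explicitly, from the given $\S_{\delta'}$-solution $\phi$ and the target endpoints $y_0, y_1$, a solution $\psi$ of $\Sd$ on $[0,T]$ with $\psi(0)=y_0$ and $\psi(T)=y_1$. The key resource is the slack $\delta-\delta'>0$ in the disturbance budget, which will be used to bend $\phi$ into the required trajectory $\psi$.

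Set $\psi(t):=\phi(t)+e(t)$, where $e(t)$ is the linear interpolation
\begin{equation*}
e(t) = (1-t/T)(y_0-\phi(0)) + (t/T)(y_1-\phi(T)).
\end{equation*}
Then $\psi(0)=y_0$ and $\psi(T)=y_1$. Since $e(t)$ is a convex combination of two vectors of norm at most $r$, we have $\abs{e(t)}\le r$ on $[0,T]$, and the hypothesis $T\ge\tau$ gives $\abs{e'(t)}\le 2r/T\le 2r/\tau$. Writing $\phi'(t)=f(\phi(t))+d(t)$ with $\abs{d(t)}\le\delta'$, the associated disturbance along $\psi$ is
\begin{equation*}
d'(t):=\psi'(t)-f(\psi(t)) = [f(\phi(t))-f(\psi(t))] + d(t) + e'(t),
\end{equation*}
which is measurable since $d$ is measurable and the remaining terms are continuous. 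If $\abs{d'(t)}\le\delta$ pointwise, then $\psi$ is a valid solution of $\Sd$ reaching $y_1$ from $y_0$ at time $T$.

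To secure the disturbance bound, fix any $r_0>0$ in advance and let $L$ be the Lipschitz constant of $f$ on the compact set $K+r_0\B$ (which exists because $f$ is locally Lipschitz and $K$ is compact). For $r\le r_0$ we have $\psi(t)\in K+r\B\subseteq K+r_0\B$, so $\abs{f(\phi(t))-f(\psi(t))}\le L\abs{e(t)}\le Lr$, and therefore
\begin{equation*}
\abs{d'(t)} \le Lr + \delta' + 2r/\tau = \delta' + r(L + 2/\tau).
\end{equation*}
Choosing $r := \min\set{r_0,\; (\delta-\delta')/(L+2/\tau)}$, which depends only on $K, \tau, \delta', \delta$, yields $\abs{d'(t)}\le\delta$ and completes the construction.

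The main obstacle is conceptual rather than computational: one must choose $r$ uniformly in the admissible trajectories $\phi$ and in $T\ge\tau$. Uniformity in $\phi$ follows from pre-selecting the neighborhood $K+r_0\B$, which supplies a Lipschitz constant $L$ depending only on $K$; uniformity in $T$ follows from the bound $2/T\le 2/\tau$, which absorbs the $T$-dependence into the fixed lower bound $\tau$. Among smooth interpolations matching the prescribed endpoints, the linear choice minimizes $\sup_t \abs{e'(t)}$, making it the most economical use of the slack budget.
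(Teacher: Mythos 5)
Your construction is correct: the linear homotopy $e(t)$ yields an absolutely continuous $\psi=\phi+e$ with the right endpoints, the induced disturbance is measurable and bounded by $\delta'+r(L+2/\tau)\le\delta$ for your choice of $r$, and $r$ depends only on $K,\tau,\delta',\delta$ (through the Lipschitz constant on $K+r_0\B$), so the uniformity over $\phi$ and $T\ge\tau$ holds. The paper itself only cites this lemma from \cite{liu2020converse} without reproducing a proof, and your argument is essentially the standard one used there — exploiting the slack $\delta-\delta'$ to absorb both the Lipschitz mismatch $Lr$ and the steering term $2r/\tau$ — so nothing further is needed.
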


We present the proof of Proposition \ref{prop:uas} as follows.

\begin{proof}
We verify conditions (1) uniform stability and (2) uniform attractivity as required by Definition \ref{def:stability}. 

(1) For any $\eps>0$, let $\tau>0$ be the minimal time that is required for solutions of $\S_{\delta'}$ to \ymmark{travels} from \ymmark{interior of} $A+\frac{\eps}{2}\B$ to \ymmark{$\Real\setminus (A+\eps\B)$}. The existence of such a $\tau$ follows from that $f$ is locally Lipschitz and an argument using Gronwall's inequality. Pick $\delta_\eps<\min(r,\frac{\eps}{2})$, where $r$ is from Lemma \ref{lem:control}, applied to the set $A+\eps\B$ and scalars $\tau$, $\delta'$, and $\delta$. Let $\phi$ be any solution of $\S_{\delta'}$ such that $\norm{\phi(0)}_A<\delta_\eps$. We show that $\norm{\phi(t)}_A<\eps$ for all $t\ge 0$. Suppose that this is not the case. Then $\norm{\phi(t_1)}_A\ge\eps$ for some $t_1\ge\tau>0$. Since $\delta_\eps<r$ and $A$ is compact, we can always pick $y_0\in A$ such that $y_0\in \phi(0)+r\B$. By Lemma \ref{lem:control}, there exists a solution of $\Sd$ from $y_0\in A$ to $y_1=\phi(t_1)\not\in A$. This contradicts that $A$ is forward invariant for $\Sd$. 

(2) Fix any $\eps_0>0$. Following part (1), choose $\delta_{\eps_0}$ such that $\norm{\phi(0)}_A<\delta_{\eps_0}$ implies $\norm{\phi(t)}_A<\eps_0$ for any solution $\phi(t)$ of $\S_{\delta'}$. Let $r$ be chosen according to Lemma \ref{lem:control} with the set $A+\eps_0\B$ and scalars $\tau=1$, $\delta'$, and $\delta$. Choose $\rho\in(0,r)$. Let $\phi$ be any solution of $\S_{\delta'}$. We show that $\norm{\phi(0)}_A<\rho$ implies that $\phi(t)\in A$ for all $t\ge 1$. Suppose that this is not the case. Then there exists some $t_1\ge 1$ such that $\phi(t_1)\not\in A$. Since $\rho<r$, we can pick $y_0$ such that $y_0\in \phi(0)+r\B$ and $y_0\in  A$. By Lemma \ref{lem:control}, there exists a solution of $\Sd$ from $y_0\in A$ to $y_1=\phi(t_1)\not\in A$. This contradicts that $A$ is forward invariant for $\Sd$. Hence $\phi(t)\in A$ for all $t\ge 1$. 
This clearly implies (2). 
\end{proof}

Proposition \ref{prop:uas} establishes a link between robust invariance and asymptotic stability. By combining Lemma \ref{lem:A}, Proposition \ref{prop:uas}, and Theorem \ref{thm:conv1}, we can obtain the following converse theorem for a reach-avoid-stay specification. 

\begin{thm}\label{thm:conv2}
Suppose that $\Omega$ is compact, $U$ is closed, and $\Omega\cap U=\emptyset$, and $\Sd$ satisfies the  reach-avoid-stay specification $(W,U,\Omega)$. Then there exists a compact set $A\subset\Omega$ such that, for any $\delta'\in[0,\delta)$ and any proper indicator $\omega$ for $A$ on $D$, there exists an open set $D$ such that $(A\cup W)\subset D$ and $D\cap U=\emptyset$, a smooth function $V:\,D\ra\Real_{\ge 0}$ and class $\mathcal{K}_{\infty}$ functions $\alpha_1$ and $\alpha_2$ such that conditions (\ref{eq:lyap1}) and (\ref{eq:lyap2}) hold for all $x\in D$ and $d\in\delta'\B$. 
\end{thm}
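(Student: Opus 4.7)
The plan is to reduce the reach-avoid-stay case to the stability-with-safety case already handled by Theorem~\ref{thm:conv1}, by identifying the correct compact attractor inside $\Omega$. I would let $A$ be the compact forward-invariant set provided by Lemma~\ref{lem:A}, so that $A$ is nonempty, $A \subset \Omega$, and hence $A \cap U = \emptyset$. For each fixed $\delta' \in [0, \delta)$, Proposition~\ref{prop:uas} then gives that $A$ is UAS for $\S_{\delta'}$, which supplies the set-stability hypothesis needed by Theorem~\ref{thm:conv1}.

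The main work is to check that $\S_{\delta'}$ actually satisfies the stability-with-safety specification $(W, U, A)$. Safety is immediate, since every solution of $\S_{\delta'}$ is a solution of $\Sd$ (a disturbance signal valued in $\delta'\B$ is also valued in $\delta\B$), giving $\R_{\delta'}(W) \subset \Rd(W)$ and hence $\R_{\delta'}(W) \cap U = \emptyset$. For the attraction condition $W \subset \mathcal{G}_{\delta'}(A)$, I would establish the stronger inclusion $\Rd^{t \ge T}(W) \subset A$, where $T$ is the time from the reach-avoid-stay specification: given $y = \phi(t_0)$ with $x_0 \in W$, $\phi \in \Pd(x_0)$, and $t_0 \ge T$, for any $\psi \in \Pd(y)$ the concatenation (with the correspondingly concatenated disturbance signal) yields an $\Sd$-solution from $x_0$ that equals $\psi(t - t_0)$ for $t \ge t_0$; by the reach-avoid-stay hypothesis this lies in $\Omega$ for all $t \ge T$, so $\psi$ stays in $\Omega$ for all $t \ge 0$, giving $y \in A$. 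Since $A$ is forward invariant for $\Sd$ and hence also for $\S_{\delta'}$, every $\S_{\delta'}$-solution from $W$ enters $A$ by time $T$ and remains there, so it trivially converges to $A$.

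With all hypotheses of Theorem~\ref{thm:conv1} verified for $\S_{\delta'}$, I would invoke that theorem (with $\delta$ replaced by $\delta'$) to obtain an open set $D \supset A \cup W$ disjoint from $U$, together with a smooth function $V : D \ra \Real_{\ge 0}$ and class $\mathcal{K}_\infty$ functions $\alpha_1, \alpha_2$ satisfying the indicator sandwich (\ref{eq:lyap1}) and the decay inequality (\ref{eq:lyap2}) for all $d \in \delta'\B$, as required. The main technical hurdle is the measurable concatenation of disturbance signals underlying $\Rd^{t \ge T}(W) \subset A$; this is routine for disturbances valued in the ball $\delta\B$ but must be stated carefully. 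A more conceptual subtlety is that one genuinely cannot take $\delta' = \delta$: Proposition~\ref{prop:uas} fails at that endpoint (as the authors explicitly flag before its statement), so the small loss of robustness in passing from $\delta$ to $\delta'$ is intrinsic to this reduction strategy.
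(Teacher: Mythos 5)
Your proposal follows essentially the same route as the paper: take $A$ from Lemma~\ref{lem:A}, observe (as in the proof of that lemma) that $\Rd^{t\ge T}(W)\subset A$, invoke Proposition~\ref{prop:uas} to get UAS of $A$ for $\S_{\delta'}$ with $\delta'<\delta$, conclude that $\S_{\delta'}$ satisfies the stability-with-safety specification $(W,U,A)$, and apply Theorem~\ref{thm:conv1}. The only difference is that you spell out details the paper leaves implicit (the inclusion $\R_{\delta'}(W)\subset\Rd(W)$ for safety, and the concatenation argument behind forward invariance of $\Rd^{t\ge T}(W)$), which is sound.
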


\begin{proof}
By Lemma \ref{lem:A}, there exists a compact set $A\subset \Omega$ that is $\delta'$-UAS for any $\delta'\in[0,\delta)$ by Proposition \ref{prop:uas}. Furthermore, as shown in the proof of Lemma \ref{lem:A}, $\Rd^{t\ge T}(W)\subset A$. This implies that, for any $\delta'\in[0,\delta)$, the domain of attraction of $A$ for $\S_{\delta'}$ includes $W$. Hence $\S_{\delta'}$ satisfy the  stability with safety guarantee specification $(W,U,A)$. The conclusion follows from that of Theorem \ref{thm:conv1}.
\end{proof}

\begin{rem}
Figure \ref{fig:sets2} provides an illustration of the sets defined for proving Theorem \ref{thm:conv2}.  
\begin{figure}[h]
\centering
\includegraphics[width=0.5\textwidth]{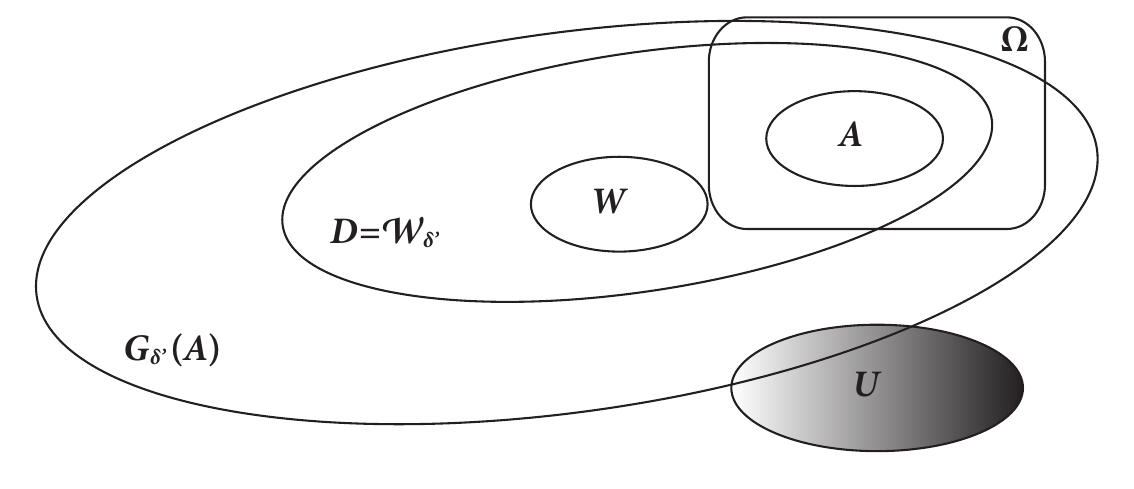}
\caption{An illustration of the sets involved in Theorem \ref{thm:conv2}. If a reach-avoid-stay specification $(W,U,\Omega)$ is satisfied, then for each $\delta'\in[0,\delta)$, we can find a set $A$ such that $\S_{\delta'}$ satisfies the stability with safety guarantee specification $(W,U,A)$. Consequently, a set $D$ and a Lyapunov function $V$ defined on $D$ can be found such that the Lyapunov conditions (\ref{eq:lyap1}) and (\ref{eq:lyap2}) hold for $\S_{\delta'}$. The conclusion of Theorem \ref{thm:conv2} follows from that of Theorem \ref{thm:conv1}.}
\label{fig:sets2}
\end{figure}
\end{rem}

It would be tempting to draw a stronger conclusion than the one in Theorem \ref{thm:conv2} by allowing $\delta'=\delta$. The following example shows that the conclusion of Theorem \ref{thm:conv2} cannot be strengthened in this regard:  Under the current assumptions of Theorem \ref{thm:conv2}, there may not exist a converse Lyapunov-barrier function satisfying conditions (\ref{eq:lyap1}) and (\ref{eq:lyap2}) for $\Sd$, even if $\Sd$ satisfies a reach-avoid-stay specification $(W,U,\Omega)$. 

\begin{exmp}\label{ex:uas}
Consider $\S$ defined by \ymmark{$\dx = - x + x^2$}. Let $W=[-1,-0.9]$, $U=[0.6,\infty)$, $\Omega=[-0.25,0.5]$, and $\delta=0.25$. It is easy to verify that $\Sd$ satisfies the reach-avoid-stay specification $(W,U,\Omega)$. However, solutions of $\Sd$ starting from $x_0=0.5+\eps$, where $\eps>0$, with $d(t)=\delta$ will tend to infinity. Furthermore, for any $x_0\in\Omega$, there exists a solution of $\Sd$ that approaches 0.5. Hence, there does not exist an open set $D$ as in Theorem \ref{thm:conv2} and a converse Lyapunov-barrier function defined on $D$ that satisfies conditions (\ref{eq:lyap1}) and (\ref{eq:lyap2}) for all $x\in D$ and $d\in\delta\B$. The reason for this is that the conclusion of Proposition \ref{prop:uas} does not hold for $\Sd$, i.e., the set $A$ defined by (\ref{eq:A}) may not be UAS for $\Sd$, even though it is UAS for $\S_{\delta'}$ whenever $\delta'\in[0,\delta)$. Indeed, it is not difficult to verify that the set $A=[\frac{1}{2}-\frac{\sqrt{2}}{2},0.5]$ and, by the observation above, the set $A$ is not UAS for $\Sd$. 
\end{exmp}

Similarly, Proposition \ref{prop:lyap-barrier} can be adapted to give the following version of converse theorem for reach-avoid-stay specifications. 

\begin{prop}%
\label{prop:lyap-barrier2}
Suppose that $\Omega$ and $W$ are compact, $U$ is closed, and $\Omega\cap U=\emptyset$, and $\Sd$ satisfies the reach-avoid-stay specification $(W,U,\Omega)$. Then for any $\delta'\in[0,\delta)$, there exists a compact $A\subset\Omega$, an open set $D$ such that $(A\cup W)\subset D$,  and smooth functions $V:\,D\ra\Real_{\ge 0}$ and $B:\,D\ra\Real$ such that 
\begin{enumerate}
    \item $V$ is positive definite on $D$ w.r.t. A, i.e., $V(x)=0$ if and only if $x\in A$; 
    \item $\nabla V\cdot (f(x)+d)<0$ for all $x\in D\setminus A$ and $d\in\delta'\B$; 
    \item $W\subset C=\set{x\in D:\,B(x)\ge 0}$ and $B(x)<0$ for all $x\in U$; 
    \item $\nabla B\cdot (f(x)+d)\ge 0$ for all $x\in D$ and $d\in\delta'\B$.
\end{enumerate}
\end{prop}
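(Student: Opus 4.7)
The plan is to derive Proposition \ref{prop:lyap-barrier2} from Theorem \ref{thm:conv2} by constructing $B$ as a simple affine transformation of the Lyapunov function $V$ already furnished there, precisely paralleling the necessity proof of Proposition \ref{prop:lyap-barrier}.

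Fix $\delta' \in [0,\delta)$ and apply Theorem \ref{thm:conv2}. This supplies a compact $A \subset \Omega$, an open $D$ with $(A \cup W) \subset D$ and $D \cap U = \emptyset$, a proper indicator $\omega$ for $A$ on $D$, and a smooth $V : D \to \Real_{\ge 0}$ together with $\alpha_1,\alpha_2 \in \mathcal{K}_\infty$ such that (\ref{eq:lyap1}) and (\ref{eq:lyap2}) hold on $D$ for every $d \in \delta'\B$. Conditions (1) and (2) of the proposition are then immediate: $V(x)=0 \iff \omega(x)=0 \iff x \in A$ follows from the lower bound $\alpha_1(\omega(x)) \le V(x)$ together with the defining property of a proper indicator, while (\ref{eq:lyap2}) gives $\nabla V(x)\cdot(f(x)+d) \le -V(x) < 0$ on $D\setminus A$.

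For the barrier, pick a compact neighborhood $K$ of $A$ with $K \subset D$ (possible since $A$ is compact and $D$ is open). Compactness of $W$ renders $K \cup W$ a compact subset of $D$, so $c := \sup_{x \in K \cup W} V(x)$ is finite; set $B(x) := c - V(x)$ on $D$. This $B$ is smooth, and $V(x) \le c$ on $W$ yields $W \subset \set{x \in D : B(x) \ge 0} = C$, establishing the first half of (3). Condition (4) is then immediate from (\ref{eq:lyap2}) via $\nabla B(x)\cdot(f(x)+d) = -\nabla V(x)\cdot(f(x)+d) \ge V(x) \ge 0$ for all $x \in D$ and $d \in \delta'\B$.

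The only real point of care, and the closest thing to an obstacle, is the clause ``$B(x) < 0$ for all $x \in U$'': because $D \cap U = \emptyset$ from Theorem \ref{thm:conv2}, $B$ is not initially defined on $U$, so one must either interpret this clause (as in the necessity proof of Proposition \ref{prop:lyap-barrier}) on the domain of $B$, rendering it vacuous, or extend $B$ globally via a smooth partition of unity -- for instance, by a strictly negative smooth function on an open neighborhood of $U$ disjoint from $K \cup W$ -- which preserves conditions (1)--(4) on $D$. Beyond this cosmetic issue, all of the analytic work has already been carried out by Theorem \ref{thm:conv2}, and no substantive new difficulty arises.
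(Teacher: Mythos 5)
Your proposal is correct and follows essentially the paper's own route: the paper proves this proposition by the same reduction, invoking the converse result for the reach-avoid-stay case (Theorem \ref{thm:conv2}) to obtain $A$, $D$ with $D\cap U=\emptyset$, and $V$ for the $\delta'$-perturbation, and then setting $B(x)=c-V(x)$ with $c=\sup_{x\in K\cup W}V(x)$ exactly as in the necessity part of Proposition \ref{prop:lyap-barrier}, with the clause ``$B(x)<0$ on $U$'' holding vacuously since $D\cap U=\emptyset$. The partition-of-unity extension you mention is unnecessary (and would require re-verifying conditions (2) and (4) on the enlarged domain), so the vacuous interpretation is the right reading and matches the paper.
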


\begin{proof}
Similar to that of Proposition \ref{prop:lyap-barrier}. 
\end{proof}
The above converse results (Theorem \ref{thm:conv2} and Proposition \ref{prop:lyap-barrier2}) reveal that the verification and design for reach-avoid-stay specifications can indeed be centered around the problem of stability/stabilization with safety guarantees. This is \textit{without loss of generality} at least from a robustness point of view. In this regard, Lemma \ref{lem:A} and Proposition \ref{prop:uas} connect robust reach-avoid-stay specification with stability with safety guarantees. We can also prove a result in the converse direction. These statements are summarized in the following proposition.

\begin{prop}
\begin{enumerate}
    \item If $\Sd$ satisfies a stability with safety guarantee specification $(W,U,A)$ and $W$ is compact, then for every $\eps>0$, $\Sd$ satisfies the reach-avoid-stay specification $(W,U,A+\eps\B)$.
    \item If $\Sd$ satisfies a reach-avoid-stay specification $(W,U,\Omega)$, then there exists a compact set $A\subseteq\Omega$ such that, for any $\delta'\in[0,\delta)$, $\S_{\delta'}$ satisfies the stability with safety guarantee specification $(W,U,A)$. 
\end{enumerate}
\end{prop}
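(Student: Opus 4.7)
The safety condition $\Rd(W) \cap U = \emptyset$ is part of the hypothesis, so only the reach-and-stay clause requires work. Fix $\eps > 0$. By Definition \ref{def:stability}, uniform asymptotic stability of $A$ for $\Sd$ supplies a radius $\rho > 0$ and a time $T_\eps$ such that any solution entering $A + \rho\B$ remains in $A + \eps\B$ after $T_\eps$ more units of time, so it suffices to exhibit a single uniform time $T_0$ with $\Rd^{T_0}(W) \subseteq A + \tfrac{\rho}{2}\B$; applying the UAS estimate to the shifted trajectory then handles the tail. To produce $T_0$ I would invoke Theorem \ref{thm:conv1} (just proved): it furnishes a smooth Lyapunov function $V$ on the open, forward-invariant set $D = \Wd \supseteq W$ satisfying $\alpha_1(\omega(x)) \le V(x) \le \alpha_2(\omega(x))$ and $\nabla V \cdot (f + d) \le -V$, which gives $V(\phi(t)) \le V(\phi(0)) e^{-t}$ along every solution of $\Sd$ starting in $W$. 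Since $W$ is compact and $\omega$ is continuous, $M := \sup_{x \in W} \omega(x) < \infty$, hence $\omega(\phi(t)) \le \alpha_1^{-1}\!\bigl(\alpha_2(M) e^{-t}\bigr)$. A short compactness-and-contradiction argument, using that $\omega$ is a proper indicator with $\omega^{-1}(0) = A$ and compact sublevel sets, shows that $\omega^{-1}([0,\eps']) \subseteq A + \tfrac{\rho}{2}\B$ for all sufficiently small $\eps'$; choosing $T_0$ so that $\alpha_1^{-1}(\alpha_2(M) e^{-T_0}) \le \eps'$ then yields the required uniform entry time. Forward completeness of solutions from $W$ is already implied by $W \subseteq \Gd(A)$.

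\textbf{Proof plan for part (2).} Assuming the compactness of $\Omega$ that underlies Lemma \ref{lem:A}, define $A$ by equation (\ref{eq:A}). Lemma \ref{lem:A} guarantees that $A$ is a nonempty compact set, forward invariant for $\Sd$, and Proposition \ref{prop:uas} then yields that $A$ is UAS for $\S_{\delta'}$ for every $\delta' \in [0, \delta)$. Safety for $\S_{\delta'}$ is inherited because any solution of $\S_{\delta'}$ is also a solution of $\Sd$, whence $\R_{\delta'}(W) \subseteq \Rd(W)$ is disjoint from $U$. For attraction, take any $x_0 \in W$ and $\phi' \in \Phi_{\delta'}(x_0) \subseteq \Pd(x_0)$. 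The proof of Lemma \ref{lem:A} established $\Rd^{t \ge T}(W) \subseteq A$ for some $T \ge 0$, so $\phi'(T) \in A$; invariance of $A$ for $\Sd$, hence for $\S_{\delta'}$, then forces $\phi'(t) \in A$ for all $t \ge T$, so in particular $\norm{\phi'(t)}_A \to 0$. Therefore $W \subseteq \mathcal{G}_{\delta'}(A)$ and $\S_{\delta'}$ satisfies the stability with safety specification $(W, U, A)$.

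\textbf{Expected main obstacle.} The only real content is the uniform-in-$W$ entry time in part (1); part (2) is essentially bookkeeping on top of Lemma \ref{lem:A} and Proposition \ref{prop:uas}. A purely direct argument via continuous dependence of solutions of the differential inclusion $\Sd$ on initial conditions would also work in principle, but requires a priori uniform bounds on trajectories emanating from compact $W$ over arbitrarily long time intervals, which is not automatic for general $f$. Routing through the smooth Lyapunov function on $\Wd$ delivered by Theorem \ref{thm:conv1} sidesteps this difficulty and simultaneously produces the quantitative exponential rate needed to convert the UAS notion into the finite-time inclusion demanded by the reach-avoid-stay specification.
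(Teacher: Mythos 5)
Your proposal is correct in substance, but part (1) is argued by a genuinely different route than the paper. The paper's proof of (1) is a one-line application of Proposition \ref{prop:uniform} (uniformity of attraction): since $A$ is uniformly stable for $\Sd$ and every solution from the compact set $K=W$ converges to $A$, that proposition directly yields, for each $\eps>0$, a single $T(\eps)$ with $\R_{\delta}^{t\ge T}(W)\subseteq A+\eps\B$; safety and forward completeness are inherited from the hypothesis, and that is the whole argument. The "expected main obstacle'' you identify — the uniform-in-$W$ entry time — is thus exactly what Proposition \ref{prop:uniform} was built to deliver, and its proof handles the long-time uniformity you worry about not by a priori trajectory bounds but by the compactness of solution sets (Lemma \ref{lem:compact}) together with uniform stability, via a diagonal subsequence argument. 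Your detour through Theorem \ref{thm:conv1} and the exponential decay of $V$ is valid (the theorem is proved before this proposition, so there is no circularity, and your compactness argument that small sublevel sets of the proper indicator $\omega$ sit inside $A+\tfrac{\rho}{2}\B$ is sound), and it buys a quantitative convergence rate; note also that once you have $\omega(\phi(t))\le\alpha_1^{-1}(\alpha_2(M)e^{-t})$ for all $t\ge 0$ the extra "hand off to the UAS estimate after entering $A+\tfrac{\rho}{2}\B$'' step is redundant. The price of your route is that it silently strengthens the hypotheses: Theorem \ref{thm:conv1} (and indeed the very notion of a proper indicator for $A$) requires $A$ compact, $U$ closed and $A\cap U=\emptyset$, none of which appear in the statement of this proposition, whereas the paper's argument via Proposition \ref{prop:uniform} needs only $A$ closed and uniformly stable and $W$ compact. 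Your part (2) coincides with the paper's proof (Lemma \ref{lem:A} for $A$, Proposition \ref{prop:uas} for UAS of $A$ under $\S_{\delta'}$, inclusion of $\delta'$-solutions among $\delta$-solutions for safety and attraction), merely written out in more detail.
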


\begin{proof}
(1) The conclusion follows from the uniform attractivity property for solutions for $\Sd$ under the stability assumption (Proposition 29 in Appendix A. \\
(2) It follows from Lemma \ref{lem:A}, Proposition \ref{prop:uas}, and the definitions of the specifications.  
\end{proof}

\section{{Converse Control Lyapunov-Barrier Functions for Reach-Avoid-Stay Specifications}} \label{sec:control}

{

In this section, we take advantage of the results from Section \ref{sec:reach} and make a straightforward derivation on a converse control Lyapunov-barrier function theorem for $\Sd$  satisfying a  reach-avoid-stay specification $(W,U,\Omega)$ under controls. We first recast the notion from Section \ref{sec:prel} for control systems.

Given a nonempty compact convex set of control inputs  $\U\subset\mathbb{R}^p$, 
consider a nonlinear system of the form
\begin{equation}\label{eq:sys3}
\dot{x}=f(x)+g(x)u+d, 
\end{equation}
where the mapping $g:\mathbb{R}^n\rightarrow\Real^{n\times p}$ is smooth;  $u:\Real_{\geq 0}\rightarrow \U$ is a locally bounded measurable control signal, whilst the other notation remains the same.

\begin{defn}[Control strategy]
A control strategy is a function
\begin{equation}
   \kappa:\Real^n\ra \U. 
\end{equation}
\end{defn}
\iffalse
\begin{defn}[State-dependent control]
We say that a
control signal u conforms to a control strategy $\kappa$ for \eqref{eq:sys3}, denoted by $u\in\ub_\kappa$, if 
\begin{equation}
    u(t)\in\kappa(x(t)),\quad\forall t\geq 0.
\end{equation}
The set of all control
signals that confirm to $\kappa$ is denoted by $\ub_\kappa$.
\end{defn}

\fi
We further denote $\Sd^\kappa$ by the control system driven by \eqref{eq:sys3} that is comprised by $u=\kappa(x)$.
\begin{defn}[Reach-avoid-stay controllable]
A system $\Sd$ is called reach-avoid-stay controllable w.r.t.  $(W,U,\Omega)$, where $W,U,\Omega\subseteq \Real^n$, if there exists a Lipschitz continuous control strategy $\kappa$ such that the system $\Sd^\kappa$ satisfies the reach-avoid-stay specification $(W,U,\Omega)$.
\end{defn}

Now we are ready to show that reach-avoid-stay controllability implies the existence of a control Lyapunov-barrier function w.r.t. the reach-avoid-stay specification. 

\begin{thm}
Suppose that $\Omega$ is compact, $U$ is closed, and $\Omega\cap U=\emptyset$, and $\Sd$ is reach-avoid-stay controllable w.r.t. $(W,U,\Omega)$. Then there exists a compact set $A\subset\Omega$ such that, for any $\delta'\in[0,\delta)$ and any proper indicator $\omega$ for $A$ on $D$, there exists an open set $D$ such that $(A\cup W)\subset D$ and $D\cap U=\emptyset$, a smooth function $V:\,D\ra\Real_{\ge 0}$ and class $\mathcal{K}_{\infty}$ functions $\alpha_1$ and $\alpha_2$ such that, for all $x\in D$ and $d\in\delta'\B$, equation \eqref{eq:lyap1} is satisfied and 
\begin{equation}
    \inf\limits_{u\in \mathcal{U}}\sup\limits_{x\in D}\sup\limits_{d\in\delta\mathcal{B}}[L_fV(x,d)+L_gV(x)u+V(x)]\leq  0.
\end{equation}
\end{thm}

\begin{proof}
By assumption, there exists a Lipschitz continuous $\kappa$ that renders the
solutions satisfy reach-avoid-stay specification $(W,U,\Omega)$. 
\iffalse
By \cite[Proposition 2.22]{freeman2008robust}, there exists a locally Lipschitz function $\phi: D\times \U\ra\U$ in $x$ such that 
$$\kappa(x)=\phi(x,\B_\U) $$
for all $(x,u)\in D\times \U$, where $\B_\U$ is the closed unit ball in $\U$.
\fi
Then by Theorem \ref{thm:conv2}, for any proper indicator $\omega$ for $A$ on $D$, there exists a function $V: D\rightarrow\Real_{\geq 0}$ satisfying \eqref{eq:lyap1} and
$$\sup\limits_{d\in\delta'\mathcal{B}} [L_fV(x,d)+L_gV(x)\kappa(x)+V(x)]\leq 0  $$
for all $x\in D$.  Taking the supremum over all $x\in D$, we have 
$$\sup\limits_{x\in D}\sup\limits_{d\in\delta'\mathcal{B}} [L_fV(x,d)+L_gV(x)\kappa(x)+V(x)]\leq 0.   $$ 
Since we have the control  $\kappa(x)\in\U$, it follows that
$$\inf\limits_{u\in \U}\sup\limits_{x\in D}\sup\limits_{d\in\delta'\mathcal{B}} [L_fV(x,d)+L_gV(x)u+V(x)]\leq 0. $$ 
\end{proof}
With a similar approach, Proposition \ref{prop:lyap-barrier2} can be applied to give the following version of converse control Lyapunov-barrier functions theorem for reach-avoid-stay specifications. 

\begin{prop}%
\label{prop:lyap-barrier3}
Suppose that $\Omega$ and $W$ are compact, $U$ is closed, and $\Omega\cap U=\emptyset$, and $\Sd$ is reach-avoid-stay controllable w.r.t. $(W,U,\Omega)$. Then for any $\delta'\in[0,\delta)$, there exists a compact $A\subset\Omega$, an open set $D$ such that $(A\cup W)\subset D$,  and smooth functions $V:\,D\ra\Real_{\ge 0}$ and $B:\,D\ra\Real$ such that 
\begin{enumerate}
    \item $V$ is positive definite on $D$ w.r.t. A, i.e., $V(x)=0$ if and only if $x\in A$; 
    \item 
    $\inf\limits_{u\in \U}\sup\limits_{x\in D}\sup\limits_{d\in\delta'\mathcal{B}} [L_fV(x,d)+L_gV(x)u]\leq 0$;
    \item $W\subset C=\set{x\in D:\,B(x)\ge 0}$ and $B(x)<0$ for all $x\in U$; 
    \item 
    $\sup\limits_{u\in \U} [L_fB(x,d)+L_gB(x)u]\geq 0$
for all $x\in D$ and $d\in\delta'\B$.
\end{enumerate}
\end{prop}

}

\section{Conclusions}\label{sec:con}

In this paper, we proved two converse Lyapunov-barrier function theorems for nonlinear systems satisfying either asymptotic stability with a safety constraints or a reach-avoid-stay type specification. In the former case, we show that a smooth Lyapunov-barrier function can be defined on the entire set of initial conditions from which asymptotic stability with a safety constraint can be { satisfied}. For the latter, we establish a converse theorem via a robustness argument. It is shown by example that the statement cannot be strengthened without additional assumptions. { We further extend the results to establish converse control Lyapunov-barrier functions for systems with control inputs.}

The focus of the current paper is on converse Lyapunov-barrier functions, applying which we make a quick extension to converse control Lyapunov-barrier function. %
{\color{black} There are two limitations in our work. We only considered an additive measurable disturbance in the right-hand side of the dynamical systems for the purpose of establishing converse Lyapunov-barrier results. In addition, similar to other converse Lyapunov theorems, the existence results are not constructive.}

An interesting future direction is to explore computational techniques for constructing Lyapunov-barrier function that is defined on the whole set of initial conditions (or as large a subset as possible of this set) from which a stability with safety guarantee or reach-avoid-stay specification is achievable, {for instance, learning techniques \cite{Ravanbakhsh2017learning,berkenkamp2016safe,zhao2020synthesizing} or interval analysis \cite{ratschan2010providing,djaballah2017construction}}. In this regard, the results of this paper (especially Theorems \ref{thm:conv1} and \ref{thm:conv2}) can hopefully shed some light into the development of such computational techniques with completeness (or approximate completeness) guarantees. 

\begin{ack}                               %
This work was supported in part by the Natural Sciences and Engineering Research Council of Canada, the Canada Research Chairs program, and an Early Researcher Award from the Ontario Ministry of Research, Innovation and Science.   %
\end{ack}

\bibliographystyle{plain}        %
\bibliography{clb} 

\appendix

\section{Proof of Lemma \ref{lem:winset}}\label{sec:open}

We first state two lemmas on the properties of the solutions of $\Sd$. 

The first one is well known from the basic theory of ODEs (see, e.g, \cite[Theorem 55, Appendix C]{sontag1998mathematical}). 

\begin{lem}[Continuous dependence]
Suppose that for some $x_0\in\Real^n$ there exists some $T>0$ such that solutions for $\Sd$ starting from $x_0$ are defined on $[0,T]$. Then there exists some $\delta>0$ such that solutions starting from $x_0+\delta\B$ are also defined on $[0,T]$ and there exists a constant $C$ (depending on $T$ and $x_0$) such that 
$$
\abs{\phi(t;x,d) - \phi(t;x_0,d)}\le C\abs{x-x_0}
$$
for all $x\in x_0+\delta\B$ and $d:\,[0,T]\ra \delta\B$.
\end{lem}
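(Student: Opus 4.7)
The plan is to follow the classical ODE continuous-dependence argument, adapted to account for the disturbance $d$ being merely measurable and valued in the fixed ball $\delta\B$. The proof proceeds in three steps: (i) show that the reachable set from $x_0$ on $[0,T]$ is bounded uniformly over all admissible $d$; (ii) use local Lipschitz continuity of $f$ on a compact neighborhood of this set together with Gronwall's inequality to obtain the desired Lipschitz estimate while trajectories remain in the neighborhood; and (iii) choose the initial-condition radius small enough that a continuation argument forces trajectories starting nearby to stay in that neighborhood throughout $[0,T]$.

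For step (i), set
$$
R=\bigcup_d \set{\phi(t;x_0,d):\,t\in[0,T]},
$$
the union taken over all measurable $d:[0,T]\ra\delta\B$. I would argue by contradiction: if $R$ were unbounded, one could extract a sequence of admissible disturbances $d_n$ and times $t_n\in[0,T]$ with $\abs{\phi(t_n;x_0,d_n)}\ra\infty$. On any sublevel set $\set{\abs{y}\le M}$ the vector fields $f(y)+d_n(t)$ are uniformly bounded, so the family of trajectories is equicontinuous up to the first exit time from that sublevel set; an Arzel\`a--Ascoli argument, combined with convexity of $F_\delta(x)=f(x)+\delta\B$ to identify the limit as a solution of $\Sd$ from $x_0$, then produces a solution that escapes to infinity within $[0,T]$, contradicting the standing hypothesis.

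For steps (ii)--(iii), fix $\eta>0$, let $K=\overline{R}+\eta\B$, and let $L$ be the Lipschitz constant of $f$ on $K$. For any $x$ with $\abs{x-x_0}\le\rho$ and any measurable $d$, as long as both $\phi(\cdot;x,d)$ and $\phi(\cdot;x_0,d)$ remain in $K$, the integral form
$$
\phi(t;x,d)-\phi(t;x_0,d)=(x-x_0)+\int_0^t\bigl[f(\phi(s;x,d))-f(\phi(s;x_0,d))\bigr]\,ds
$$
together with Gronwall's lemma yields $\abs{\phi(t;x,d)-\phi(t;x_0,d)}\le\abs{x-x_0}\,e^{LT}$. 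Choosing the neighborhood radius $\rho=\eta e^{-LT}$ (and relabeling it as the $\delta$ in the statement), the right-hand side is bounded by $\eta$, so $\phi(\cdot;x,d)$ cannot exit $K$ on $[0,T]$: if $\tau^{*}\le T$ were the sup of times the trajectory stays in $K$, the estimate gives $\phi(\tau^{*};x,d)\in\overline{R}+\eta\B\subset\mathrm{int}(K)$, contradicting maximality of $\tau^{*}$. This proves existence of $\phi(\cdot;x,d)$ on $[0,T]$ and yields the desired bound with $C=e^{LT}$, which depends only on $T$ and $x_0$ (through $R$ and $L$).

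The main technical obstacle is step (i): the existence of each individual trajectory on $[0,T]$ is given by hypothesis, but upgrading this to a \emph{uniform} bound on the reachable set over the infinite-dimensional space of admissible disturbances is nontrivial and relies on the compactness/convexity argument above. Once that bound is in hand, Gronwall plus a maximality/continuation argument gives the rest in routine fashion.
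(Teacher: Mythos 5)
The paper itself gives no proof of this lemma: it is quoted as a standard fact with a pointer to \cite[Theorem 55, Appendix C]{sontag1998mathematical}. Your self-contained argument is exactly the classical route behind that citation and is sound in outline: because the same disturbance $d$ drives both trajectories, it cancels in the integral form of the difference, and Gronwall with the Lipschitz constant $L$ of $f$ on a compact neighborhood $K$ of the reachable tube yields the bound with $C=e^{LT}$, uniformly in $d$; a continuation argument keeps the nearby trajectory inside $K$ on $[0,T]$. Two remarks. First, your step (i) — uniform boundedness of $R=\Rd^{0\le t\le T}(x_0)$ over all admissible $d$ — is precisely the content of the paper's compactness-of-reachable-sets lemma (Lemma \ref{lem:compact}, quoted from \cite{filippov1988differential}), so it can be invoked directly rather than re-derived; if you keep your Arzel\`a--Ascoli/convexity sketch, the diagonal extraction over growing sublevel sets and the identification of the limit as a maximal solution blowing up by time $T$ need a few more lines, but the ingredients you name are the right ones. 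Second, a small bookkeeping slip in step (iii): with $K=\overline{R}+\eta\B$ (balls are closed in this paper's notation) and $\rho=\eta e^{-LT}$, the Gronwall estimate at the exit time $\tau^{*}$ only gives $\phi(\tau^{*};x,d)\in\overline{R}+\eta\B=K$, which is \emph{not} contained in $\mathrm{int}(K)$, so the stated contradiction does not quite close, and moreover you should note that boundedness of the trajectory up to $\tau^{*}$ plus local existence is what allows continuation past $\tau^{*}$. The fix is trivial: leave a margin, e.g.\ take $\rho=\tfrac{\eta}{2}e^{-LT}$, or set $K=\overline{R}+2\eta\B$ with $L$ the Lipschitz constant on this larger set and $\rho=\eta e^{-LT}$, so the perturbed trajectory stays at distance at most $\eta$ from $\overline{R}$, strictly inside $K$, and maximality of $\tau^{*}<T$ is genuinely contradicted. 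With that adjustment the proof is complete and matches the standard argument the paper relies on.
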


The next result is on topological properties of solutions of differential inclusions satisfying some basic conditions. It can be found, e.g., in \cite[Theorem 3, Section 7]{filippov1988differential}. Note that the differential inclusion we consider $\Sd:\,  x'\in F_{\delta}(x):= f(x)+\delta\B$ straightforwardly satisfies the basic conditions there (i.e., $F_\delta$ is upper semicontinuous and takes nonempty, compact, and convex values).

\begin{lem}[Compactness of reachable sets]\label{lem:compact}
Let $K\subset\Real^n$ be a compact set. Suppose that there exists some $\tau>0$ such that solutions of $\Sd$ starting from $K$ are always defined on $[0,\tau)$. Then, for any $T\in[0,\tau)$, $\Rd^{0\le t\le T}(K)$ is a compact set. Furthermore, solutions of $\Sd$ on $[0,T]$ form a compact set under the uniform convergence topology. 
\end{lem}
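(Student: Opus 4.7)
The plan is to apply the Arzel\`a--Ascoli theorem to the family of solutions of $\Sd$ on $[0,T]$ starting from $K$, and then obtain compactness of the reachable set as the continuous image of a compact set under the evaluation map.

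First I would establish a uniform bound. By continuous dependence applied at each $x_0\in K$, there is a neighborhood $U_{x_0}$ and a constant $M_{x_0}$ such that $\abs{\phi(t;x,d)}\le M_{x_0}$ for all $x\in U_{x_0}$, measurable $d:[0,T]\to\delta\B$, and $t\in[0,T]$. A finite subcover of $K$ by the $U_{x_0}$ yields a uniform bound $M$, so every solution is confined to the compact ball $M\B$ and $\Rd^{0\le t\le T}(K)\subset M\B$. Since $f$ is continuous, it is bounded on $M\B$ by some $L$, hence $F_\delta$ has norm at most $L+\delta$ there. Every solution is therefore absolutely continuous and Lipschitz with constant $L+\delta$, which gives equicontinuity of the family.

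Second, by Arzel\`a--Ascoli the family of solutions is relatively compact in $C([0,T],\Real^n)$. To upgrade to compactness I would verify closedness: if $\phi_k\to\phi$ uniformly, with each $\phi_k$ a solution of $\Sd$ starting in $K$, then $\phi(0)\in K$ by closedness of $K$, and I claim $\phi$ is itself a solution. For the claim I would invoke the standard convergence theorem for differential inclusions with upper semi-continuous, convex compact-valued right-hand side: the derivatives $\dot\phi_k$ are uniformly bounded, so along a subsequence they converge weakly in $L^1$; Mazur's lemma applied to convex combinations, together with convexity of $F_\delta(\phi(s))$ and upper semi-continuity in $x$, then identifies the weak limit as a measurable selection $\dot\phi(s)\in F_\delta(\phi(s))$ for almost every $s$. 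Compactness of $\Rd^{0\le t\le T}(K)$ follows because it is the image of the compact product (solutions)\,$\times$\,$[0,T]$ under the continuous map $(\phi,t)\mapsto\phi(t)$.

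The main obstacle is the closedness step. Passing to the limit in $\phi_k(t)=\phi_k(0)+\int_0^t\dot\phi_k(s)\,ds$ requires controlling the derivatives, which only converge weakly in $L^1$; the convexity of $F_\delta$, guaranteed here by the Minkowski sum with the ball $\delta\B$, is essential for concluding that the weak limit remains a selection of $F_\delta(\phi(\cdot))$. Everything else reduces to standard compactness machinery, and this is the reason the lemma is merely cited rather than reproved in the paper.
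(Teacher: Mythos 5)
The paper does not prove this lemma; it cites \cite[Theorem~3, Section~7]{filippov1988differential}, observing that $F_\delta(x)=f(x)+\delta\B$ is upper semicontinuous with nonempty, compact, convex values. Your outline — a priori bound, equicontinuity, Arzel\`a--Ascoli for relative compactness, then closedness via uniform $L^1$-boundedness of derivatives, weak compactness, Mazur's lemma, and convexity plus upper semicontinuity of $F_\delta$ — is exactly the standard route to that theorem, and your treatment of the closedness step is correct: the Minkowski sum with $\delta\B$ is indeed what supplies the convexity needed to identify the weak limit as a selection of $F_\delta(\phi(\cdot))$.

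The gap is in the first step. The continuous dependence lemma stated in the paper compares two solutions driven by the \emph{same} disturbance: it gives $\abs{\phi(t;x,d)-\phi(t;x_0,d)}\le C\abs{x-x_0}$, which bounds $\phi(t;x,d)$ only in terms of the reference $\phi(t;x_0,d)$, and that reference still depends on $d$. To extract a constant $M_{x_0}$ valid for \emph{all} measurable $d:[0,T]\to\delta\B$ you would first need to know that $\sup_{t\in[0,T],\,d}\abs{\phi(t;x_0,d)}<\infty$, i.e.\ that $\Rd^{0\le t\le T}(x_0)$ is bounded — but that is precisely the lemma in the singleton case $K=\{x_0\}$, so the reduction is circular. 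There is no shortcut: $f$ is only locally Lipschitz with no growth hypothesis, so a direct Gronwall estimate does not close, and the usual way to obtain the a priori bound is a bootstrapping argument in which one assumes boundedness of $\Rd^{0\le s\le t}(K)$ fails first at some $t^*\le T$, extracts a subsequential limit of trajectories on $[0,\tau]$ for $\tau<t^*$, and uses the fact that the limit solution is defined past $t^*$ to get a contradiction. That extraction already invokes the Arzel\`a--Ascoli-plus-closedness machinery you place in step two. So the a priori bound and the closedness step are intertwined rather than sequential, and the ``main obstacle'' is not only the closedness; the bound cannot be dispatched by a finite subcover over continuous-dependence neighborhoods. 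Once the bound is in place, the remainder of your argument is sound.
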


The following result shows that under the uniform stability assumption (i.e., condition (1) in Definition \ref{def:stability}), attraction of solutions starting from any compact set within the domain of attraction is always uniform. The proof of the following result is modeled after the proof for Proposition 3 in \cite[cf. Claim 4]{teel2000smooth}. 

\begin{prop}[Uniformity of attraction]\label{prop:uniform}
Suppose that a closed set %
$A\subset\Real^n$ is uniformly stable for $\Sd$, i.e., condition (1) of Definition \ref{def:stability} holds. Let $K$ be a compact set. Then the following two statements are equivalent:
\begin{enumerate}
    \item For any $x_0\in K$ and any $\phi\in\Phi_{\delta}(x_0)$, $\phi$ is defined for all $t\ge 0$ and 
    $$
    \lim_{t\ra\infty}\norm{\phi(t)}_A = 0.
    $$
    \item For every $\eps>0$, there exists $T=T(\eps)>0$ such that 
    $$
    \norm{\phi(t)}_A < \eps
    $$
    holds for any $x_0\in K$, $\phi\in\Pd(x_0)$, and $t\ge T$. 
    \end{enumerate}
\end{prop}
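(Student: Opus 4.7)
The plan is to prove (2) $\Longrightarrow$ (1) by a direct unpacking of the definitions --- given $x_0 \in K$ and $\phi \in \Pd(x_0)$, applying (2) with $\eps = 1/n$ yields $\norm{\phi(t)}_A < 1/n$ for all $t \ge T(1/n)$, so $\phi$ is forward complete and $\norm{\phi(t)}_A \to 0$ --- and to prove the nontrivial direction (1) $\Longrightarrow$ (2) by contradiction using a time-shift argument combined with compactness of the solution set.

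Suppose (2) fails. Then there exist $\eps_0 > 0$ and sequences $x_m \in K$, $\phi_m \in \Pd(x_m)$, and $t_m \to \infty$ with $\norm{\phi_m(t_m)}_A \ge \eps_0$. Let $\delta_0 := \delta_{\eps_0} > 0$ be the modulus furnished by uniform stability. The first step is the following time-shift observation: if $\norm{\phi_m(s)}_A < \delta_0$ at some $s \in [0, t_m]$, then the shifted trajectory $\psi(t) := \phi_m(t+s)$ is again a solution of $\Sd$ (by time-invariance of $F_\delta$, and because shifts of measurable disturbances remain measurable and $\delta\B$-valued), so uniform stability applied to $\psi$ yields $\norm{\phi_m(t_m)}_A = \norm{\psi(t_m - s)}_A < \eps_0$, a contradiction. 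Hence $\norm{\phi_m(s)}_A \ge \delta_0$ for every $s \in [0, t_m]$.

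The second step extracts a limiting trajectory that violates (1). By compactness of $K$, pass to a subsequence along which $x_m \to x^* \in K$. Under (1), every solution starting in $K$ is defined on $[0, \infty)$, so Lemma \ref{lem:compact} provides compactness of the restricted solution set on $[0, T]$ under the uniform topology, for each $T > 0$. A diagonal extraction across $T = 1, 2, \ldots$ then yields a further subsequence of $\phi_m$ converging uniformly on every compact subinterval of $[0, \infty)$ to a single trajectory $\phi^* \in \Pd(x^*)$. Since $t_m \to \infty$, the uniform lower bound $\norm{\phi_m(s)}_A \ge \delta_0$ on $[0, t_m]$ passes to the limit, giving $\norm{\phi^*(s)}_A \ge \delta_0 > 0$ for all $s \ge 0$, which contradicts (1) applied to $x^* \in K$ and $\phi^* \in \Pd(x^*)$. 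The main obstacle is verifying that the diagonally extracted limit is genuinely a solution on the entire half-line, rather than only a consistent family of solutions on each finite interval; this is precisely the content of Lemma \ref{lem:compact} (stemming from upper semicontinuity and convex compact values of $F_\delta$), together with the routine fact that nested uniform limits on $[0, T]$ agree on overlaps and glue into a global trajectory in $\Pd(x^*)$.
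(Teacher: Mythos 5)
Your proposal is correct and follows essentially the same route as the paper: contradiction, uniform stability giving the lower bound $\norm{\phi_m(s)}_A\ge\delta_0$ on $[0,t_m]$ (the paper leaves the time-shift justification implicit, which you make explicit), then a diagonal extraction via Lemma \ref{lem:compact} to obtain a limiting solution $\phi^*\in\Pd(x^*)$. The only cosmetic difference is how the contradiction is closed: you pass the lower bound to the limit to get $\norm{\phi^*(s)}_A\ge\delta_0$ for all $s$, while the paper applies (1) to the limit and uses uniform convergence at a finite time $T$; both are equivalent.
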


\begin{proof}
Clearly, (2) implies (1). We prove that (1) also implies (2) under the uniform stability assumption. Suppose that (2) does not hold. Then there exists some $\eps_0>0$ such that for all $n>0$ there exists $x_n\in K$,  $\phi_n\in \Pd(x_n)$, and $t_n\ge n$ such that 
\begin{equation}\label{eq:phi0}
\norm{\phi_n(t_n)}_A\ge \eps_0. 
\end{equation}
Let $\delta_0=\delta_{\eps_0}$ be given by condition (1) of Definition \ref{def:stability}. For every $n>0$, we must have
\begin{equation}
    \norm{\phi_n(t)}_A\ge \delta_0,\quad \forall t\in[0,n].
\end{equation}

\begin{claim}\label{clm:conv}
There exist subsequences $\set{x_n}$ and $\phi_n\in \Pd(x_n)$ such that $x_n$ converges to $x$ and $\phi_n$ converges to a solution $\phi\in \Pd(x)$. The latter convergence is uniform on every compact interval of $\Real_{\ge 0}$. 
\end{claim}

\textbf{Proof of Claim \ref{clm:conv}} From (1), we know that solutions starting from $K$ are always forward complete. Since $K$ is compact, we can assume without loss of generality that $\set{x_n}$ converges to $x\in K$ (otherwise we can pick a subsequence). By Lemma \ref{lem:compact}, there exists a subsequence of $\set{\phi_n}$, denoted by $\set{\phi_{1m}}$, that converges uniformly on $[0,1]$ to a solution $\phi_1\in \Pd(x)$. By the same argument, $\set{\phi_{1m}}$ has a subsequence, denoted by $\set{\phi_{2m}}$, that converges uniformly on $[0,2]$ to a solution $\phi_2\in \Pd(x)$. Repeat  this argument and pick the diagonal $\set{\phi_{mm}}$. Then  $\set{\phi_{mm}}$ has the claimed property. \qed

Let $\phi\in\Pd(x)$ be given by the claim. By statement (1), there exists $T>0$ such that 
\begin{equation}\label{eq:phi1}
    \norm{\phi(t)}_A<\frac{\delta_0}{2},\quad \forall t\ge T.
\end{equation}
However, since $\set{\phi_{mm}}$ converges to $\phi$ uniformly on $[0,T]$, there exists some $n\ge T$ such that 
\begin{equation}\label{eq:phi2}
    \abs{\phi_{n}(t)-\phi(t)}<\frac{\delta_0}{2},\quad \forall t\in [0,T].
\end{equation}
The equations (\ref{eq:phi1}) and (\ref{eq:phi2}) give $\norm{\phi_n(T)}_A<\delta_0$, which contradicts (\ref{eq:phi0}). 
\end{proof}

\textbf{Proof of Lemma 9} We can easily verify that $\Wd$ is forward invariant and $W\subset\Wd\subseteq \Gd(A)$ by its definition. We show that $\Wd$ is open. 

Let $x_0\in \Wd$. Let $\rho>0$ be given by condition (2) from Definition \ref{def:stability} for UAS of $A$. Choose $\eps_0<\rho$ such that 
$
(A+\eps_0\B)\cap U=\emptyset.
$
Choose $\delta_0=\delta_{\eps_0}$ according to condition (1) in Definition \ref{def:stability} for UAS of $A$. Clearly, $\delta_0\le \eps_0<\rho$. 

Then, by Proposition \ref{prop:uniform} in the Appendix, there exists some $T=T(\delta_0)>0$ such that $$\norm{\phi(t)}_A<\frac{\delta_0}{2}$$ for any solution $\phi\in\Pd(x_0)$ and all $t\ge T$. By Lemma \ref{lem:compact} in the Appendix, the set $K=\Rd^{0\le t\le T}(x_0)$ is compact. Let $\eps_1<\frac{\eps_0}{2}$ be chosen such that $(K+\eps_1\B)\cap U=\emptyset$. 

By continuous dependence of solutions of $\Sd$ with respect to initial conditions, there exists some $r>0$ such that, for all $x\in x_0+r\B$ and any $\psi\in \Pd(x)$, there exists a solution $\phi\in \Pd(x_0)$ such that 
$$
\abs{\phi(t)-\psi(t)}<\eps_1,\quad \forall t\in [0,T]. 
$$
It follows that 
\begin{equation}\label{eq:safe1}
\Rd^{\le T}(x_0+\delta\B)\subset K+\eps_1\B.
\end{equation}
Furthermore, at $t=T$, we have $\norm{\psi(T)}_A\le \norm{\phi(T)}_A+\eps_1<\frac{\delta_0}{2}+\frac{\delta_0}{2}=\delta_0$. It follows from condition (1) in Definition \ref{def:stability} that 
\begin{equation}\label{eq:safe2}
\psi(t)\in A+\eps_0\B \subset A+\rho\B
\end{equation}
for all $\psi\in \Pd(x)$, $x\in x_0+r\B$, and $t\ge T$. By condition (2) in Definition \ref{def:stability}, $\lim_{t\ra\infty}\norm{\psi(t)}_A=0$. In view of (\ref{eq:safe1}) and (\ref{eq:safe2}), $\psi(t)\not\in U$ for all $t\ge 0$. We have shown that $x\in \Wd$ for all $x\in \B_r(x_0)$. Hence $\Wd$ is open. \hfill \BlackBox

\section{Proof of Proposition \ref{prop:clf}}\label{sec:clf}

The existence of a Lyapunov function can be proved based on the $\mathcal{KL}$-stability (i.e. given in Definition \ref{defn: KL-stab}), following the techniques developed in \cite{teel2000smooth} on converse Lyapunov functions for $\mathcal{KL}$-stability. The $\mathcal{KL}$-stability considered here is in fact a special case of that in \cite{teel2000smooth}, because we do not need to consider stability with respect to two different measures as in  \cite{teel2000smooth}. We provide a definition of $\mathcal{KL}$-stability below, adapted for a proper indicator of a compact set. 

\begin{defn}\label{defn: KL-stab}
Let $A\subseteq \mathbb{R}^n$ be a compact set contained in an open set $D\subseteq \mathbb{R}^n$. Let $\omega$ be any proper indicator for $A$ on $D$. The system $\mathcal{S}_{\delta}$ is said to be $\mathcal{KL}$-stable on $D$ w.r.t. $\omega$ if 
any solution $\phi\in\Sd(x)$ with $x\in D$ is defined and remain in $D$ for all $t\ge 0$ and there exists a $\mathcal{KL}$-function $\beta$ such that
\begin{equation}\label{E: eq_ym1}
\omega(\phi(t;x))\leq \beta (\omega(x),t),\;\;\;\;\forall t\geq 0, 
\end{equation}
for all $x\in D$ and $\phi\in\Pd(x)$. 
\end{defn}

The key step in proving Proposition \ref{prop:clf} is the following lemma. 

\begin{lem}\label{lma:KL-stability}
Assume that the assumptions of Proposition \ref{prop:clf} hold. Then the system $\mathcal{S}_{\delta}$ is $\mathcal{KL}$-stable on $D$ w.r.t. $\omega$.
\end{lem}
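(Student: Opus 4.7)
The plan is to derive the $\mathcal{KL}$ estimate from two ingredients: (i) a local uniform-stability estimate transcribed into the $\omega$-scale, and (ii) the compact-set uniform-attractivity statement of Proposition \ref{prop:uniform}. Forward invariance of $D$ ensures $\omega(\phi(t;x))$ is well-defined and finite for all $t\ge 0$, $x\in D$, and $\phi\in\Pd(x)$, so the only task is to establish the bound (\ref{E: eq_ym1}).

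First I would observe that for each $r>0$ the sublevel set $L_r=\{x\in D:\,\omega(x)\le r\}$ is a compact subset of $D$: by properness of $\omega$, no sequence in $L_r$ can approach $\partial D$ or infinity, so $L_r$ is closed, bounded, and bounded away from $\partial D$. Next, since $A$ is UAS for $\Sd$, $\omega$ is continuous, and $\omega$ vanishes exactly on $A$, condition (1) of Definition \ref{def:stability} transcribes to the $\omega$-scale: for every $\eta>0$ there exists $\eta'>0$ such that $\omega(\phi(0))\le\eta'$ implies $\omega(\phi(t))\le\eta$ for all $t\ge 0$ and $\phi\in\Pd(\phi(0))$. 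Indeed, on any compact neighborhood of $A$ inside $D$, smallness of $\omega$ and smallness of $\norm{\cdot}_A$ are equivalent via continuity of $\omega$ together with $\omega^{-1}(0)=A$ and compactness of $A$. Third, for any $r>0$, applying Proposition \ref{prop:uniform} with $K=L_r\subset\Gd(A)$ yields a time $T_r>0$ after which every solution starting from $L_r$ enters a neighborhood of $A$ on which $\omega\le 1$; combined with Lemma \ref{lem:compact}, which makes $\Rd^{0\le t\le T_r}(L_r)$ a compact subset of $D$ on which $\omega$ is bounded, this gives
\[
\sigma(r):=\sup\{\omega(\phi(t)):\,t\ge 0,\ x\in L_r,\ \phi\in\Pd(x)\}<\infty.
\]

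With these bounds in hand, I would set $\mu(r,t):=\sup\{\omega(\phi(s)):\,s\ge t,\ x\in L_r,\ \phi\in\Pd(x)\}$. By $\sigma(r)<\infty$, $\mu$ is finite; it is non-decreasing in $r$, non-increasing in $t$, satisfies $\mu(r,t)\to 0$ as $t\to\infty$ for each fixed $r>0$ (by Proposition \ref{prop:uniform} applied with $\eta$ arbitrarily small, after translating to the $\omega$-scale as above), and $\mu(r,0)\to 0$ as $r\to 0^+$ (from the local transcription). These are precisely the hypotheses needed to majorize $\mu$ by a $\mathcal{KL}$-function $\beta$ via the standard construction used in \cite{teel2000smooth}, so that $\omega(\phi(t;x))\le\mu(\omega(x),t)\le\beta(\omega(x),t)$, which is (\ref{E: eq_ym1}).

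The main technical obstacle is verifying finiteness of $\sigma(r)$ on each proper-indicator sublevel set: Proposition \ref{prop:uniform} supplies attractivity only once solutions are near $A$, and uniform boundedness up to the entry time requires invoking compactness of finite-horizon reachable sets (Lemma \ref{lem:compact}) together with forward invariance of $D$ to prevent escape through $\partial D$. Once the monotonicity and limit properties of $\mu$ are in place, the $\mathcal{KL}$-majorization step is essentially routine.
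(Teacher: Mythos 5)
Your proposal is correct and follows essentially the same route as the paper's proof: compactness of the $\omega$-sublevel sets, transcription of uniform stability and attraction into the $\omega$-scale via continuity and properness of $\omega$, Proposition \ref{prop:uniform} together with Lemma \ref{lem:compact} to bound the overshoot on each sublevel set, and a final majorization by a $\mathcal{KL}$ function. Your single envelope $\mu(r,t)$ merely repackages the paper's two ingredients (the $\mathcal{K}_\infty$ overshoot bound $\gamma$ and the decay functions $\psi_r^{-1}$, combined via a minimum) into one function, so the argument is the same in substance.
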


\textbf{Proof of Lemma \ref{lma:KL-stability}}
Let $\mathcal{C}_r:=\{x\in D: \omega(x)\leq r\}$. Then by the assumptions, since $\omega$  is a proper indicator $\omega$ for $A$ on $D$, $\mathcal{C}_r$ is compact subset of $D$ for each $r\geq 0$. Fix $\rho>0$ such that $A+\rho\mathcal{B}\subset D$. We can find a $\mathcal{K}_{\infty}$-class function satisfying $\alpha(s)\geq\sup_{x\in D,\|x\|_A\leq \min(\rho,s)}\omega(x)$. Therefore, for all $\|x\|_A\leq \rho$, we have $\omega(x)\leq\alpha(\|x\|_A)$. 

\begin{claim}\label{clm:clm_ym2}
There exists a $\mathcal{K}_{\infty}$ function $\gamma$ such that, for each $x\in D$, $\omega(\phi(t;x))\leq \gamma(\omega(x))$ for all $t\geq 0$ and $\phi\in\Pd(x)$.
\end{claim}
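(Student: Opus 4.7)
My plan is to exhibit a pointwise-in-time upper envelope and then dominate it by a $\mathcal{K}_\infty$ function. For each $r \geq 0$ I would set
\[
\tilde\gamma(r) := \sup\bigl\{\omega(\phi(t;x)) : x \in \mathcal{C}_r,\; \phi \in \Pd(x),\; t \geq 0\bigr\}.
\]
Monotonicity of $\tilde\gamma$ is immediate from $\mathcal{C}_{r_1} \subseteq \mathcal{C}_{r_2}$ whenever $r_1 \leq r_2$, and $\tilde\gamma(0) = 0$ because uniform stability forces $A$ to be forward invariant for $\Sd$ and $\omega$ vanishes exactly on $A$. The substantive task is therefore to show $\tilde\gamma(r) < \infty$ for every $r > 0$.

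To bound $\tilde\gamma(r)$, I would split the time axis into a transient piece $[0, T(r)]$ and a tail $[T(r), \infty)$. Since $\omega$ is a proper indicator, $\mathcal{C}_r$ is a compact subset of $D$; since $D \subseteq \Gd(A)$ and $D$ is forward invariant, Proposition \ref{prop:uniform} applies with $K = \mathcal{C}_r$ and produces a time $T(r) \geq 0$ such that $\norm{\phi(t;x)}_A < \rho$ for every $x \in \mathcal{C}_r$, every $\phi \in \Pd(x)$, and every $t \geq T(r)$. Combined with the bound $\omega(y) \leq \alpha(\norm{y}_A)$ valid for $\norm{y}_A \leq \rho$ that was established just before the claim, this controls the tail by $\alpha(\rho)$. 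On the transient, Lemma \ref{lem:compact} guarantees that $\Rd^{0 \leq t \leq T(r)}(\mathcal{C}_r)$ is compact and, by forward invariance of $D$, contained in $D$; continuity of $\omega$ on this compact set then supplies a finite maximum $M(r)$. Hence $\tilde\gamma(r) \leq \max(M(r), \alpha(\rho)) < \infty$.

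Finally, any non-decreasing function on $\Real_{\geq 0}$ that vanishes at zero and is finite everywhere can be majorized by a $\mathcal{K}_\infty$ function. One explicit route is to pass to $\hat\gamma(r) = r + \sup_{s \leq r} \tilde\gamma(s)$, which is non-decreasing, strictly positive for $r > 0$, and unbounded, and then smooth and strictly monotonize it to obtain a continuous $\mathcal{K}_\infty$ function $\gamma \geq \hat\gamma \geq \tilde\gamma$. The resulting inequality $\omega(\phi(t;x)) \leq \gamma(\omega(x))$ is exactly the claim. The main obstacle is the finiteness of $\tilde\gamma$: it crucially uses both the uniform version of attractivity over compact subsets of $\Gd(A)$ (Proposition \ref{prop:uniform}) and the fact that, thanks to forward invariance of $D$ combined with Lemma \ref{lem:compact}, reachable sets from compact subsets of $D$ on any finite horizon remain compact subsets of $D$ on which the continuous function $\omega$ is bounded.
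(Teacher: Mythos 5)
Your argument follows essentially the same route as the paper's proof: define the envelope over sublevel sets $\mathcal{C}_r$, split time into a transient $[0,T(r)]$ controlled by compactness of the finite-horizon reachable set (Lemma \ref{lem:compact} plus forward invariance of $D$) and a tail controlled by Proposition \ref{prop:uniform} together with the bound $\omega(y)\le\alpha(\norm{y}_A)$ valid on $A+\rho\B$. That part is correct and matches the paper. However, the final majorization step contains a genuine gap. The principle you invoke --- that any non-decreasing, everywhere-finite function on $\Real_{\ge 0}$ vanishing at zero can be majorized by a $\mathcal{K}_\infty$ function --- is false: take $\tilde\gamma(0)=0$ and $\tilde\gamma(r)=1$ for all $r>0$; no continuous function with value $0$ at $0$ can dominate it near the origin. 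Accordingly, your explicit construction $\hat\gamma(r)=r+\sup_{s\le r}\tilde\gamma(s)$ cannot in general be dominated by a continuous $\mathcal{K}_\infty$ function $\gamma$ with $\gamma(0)=0$. What you actually need, and never establish, is $\lim_{r\to 0^+}\tilde\gamma(r)=0$; knowing only $\tilde\gamma(0)=0$ (forward invariance of $A$) is not enough. This is precisely the point the paper makes explicit when it notes that its envelope $M(r)$ satisfies $\lim_{r\to 0}M(r)=0$ ``due to the uniform stability property.''

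The gap is fixable with the ingredients you already have in hand. Given $\eps\in(0,\rho]$, uniform stability provides $\delta_\eps>0$ such that $\norm{x}_A<\delta_\eps$ implies $\norm{\phi(t;x)}_A<\eps$ for all $t\ge 0$ and all $\phi\in\Pd(x)$. Because $\omega$ is a proper indicator, small values of $\omega$ force proximity to $A$: if there were $x_m\in D$ with $\omega(x_m)\to 0$ but $\norm{x_m}_A\ge\delta_\eps$, properness would confine $\set{x_m}$ to a compact subset of $D$, and a convergent subsequence would yield $x^*\in D$ with $\omega(x^*)=0$ yet $\norm{x^*}_A\ge\delta_\eps$, contradicting that $\omega$ vanishes exactly on $A$. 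Hence there is $r_0>0$ with $\mathcal{C}_{r_0}\subset A+\delta_\eps\B$, and then for $r\le r_0$ every solution from $\mathcal{C}_r$ stays in $A+\eps\B\subset A+\rho\B$, giving $\tilde\gamma(r)\le\alpha(\eps)$. Letting $\eps\to 0$ shows $\tilde\gamma(r)\to 0$ as $r\to 0^+$, after which a non-decreasing, everywhere-finite envelope that is right-continuous at $0$ with value $0$ can indeed be majorized by a continuous $\mathcal{K}_\infty$ function, completing the claim.
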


\textbf{Proof of Claim \ref{clm:clm_ym2}} Indeed, for each $x\in D$, we can find an $r\geq 0$ such that $x\in\mathcal{C}_r$. %
By Proposition \ref{prop:uniform}, for any $\rho>0$ chosen above, we can find a $T$ such that $\|\phi(t;x)\|_A\leq \rho$ for all $x\in \mathcal{C}_r$ and $\phi\in\Pd(x)$. By forward invariance of $D$, it follows that $\mathcal{R}_{\delta}(\mathcal{C}_r)\subset\mathcal{R}_{\delta}^{0\leq t\leq T}(\mathcal{C}_r)\cup (A+\rho\mathcal{B})\subset D$. Since $\mathcal{C}_r$ is compact, by Lemma \ref{lem:compact}, for any finite $T$, $\mathcal{R}_{\delta}^{0\leq t\leq T}(\mathcal{C}_r)$ is also compact. The boundedness of $\mathcal{R}_{\delta}(\mathcal{C}_r)$ implies that $\overline{\mathcal{R}_{\delta}(\mathcal{C}_r)}$ is a compact subset of $D$. Let  $M(r)=\max_{x\in\overline{\mathcal{R}_{\delta}(\mathcal{C}_r)}}\omega(x)$. Then $\omega(\phi(t;x))\leq M(\omega(x))$ for all $x\in D$, $\phi\in\Pd(x)$, and $t\ge 0$. Clearly, $M(r)$ is nondecreasing (due to the inclusion relation of reachable sets from $\mathcal{C}_r$ with different $r$) and $\lim_{r\rightarrow 0}M(r)=0$ (due to the uniform stability property). The $\gamma\in\mathcal{K}_{\infty}$ in the claim can be chosen such that $M(r)\leq \gamma(r)$ for all $r\geq 0$. \qed

\begin{claim}\label{clm:clm_ym1}
For each $r>0$, there exists a strictly decreasing function $\psi_r :\mathbb{R}_{>0}\rightarrow\mathbb{R}_{>0}$ with $\lim_{t\ra\infty} \psi_r^{-1}(t)=0$ such that $\omega(\phi(t;x))\leq \psi_r^{-1}(t)$ for all $t>0$ whenever $\omega(x)\leq r$ and $\phi\in\Pd(x)$. 
\end{claim}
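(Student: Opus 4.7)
The plan is to invoke uniform attraction on the compact sublevel set $\mathcal{C}_r$ to obtain a pointwise decay estimate in the indicator $\omega$, and then to envelop this decay by a strictly decreasing function whose inverse serves as $\psi_r^{-1}$.

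First, since $\mathcal{C}_r \subset D \subset \Gd(A)$ is compact and $A$ is UAS for $\Sd$, so that the uniform stability hypothesis of Proposition \ref{prop:uniform} is met, I would apply that proposition with $K = \mathcal{C}_r$. This yields, for every $\varepsilon > 0$, a time $T(\varepsilon) \geq 0$ such that $\|\phi(t;x)\|_A < \varepsilon$ for all $x \in \mathcal{C}_r$, $\phi \in \Pd(x)$, and $t \geq T(\varepsilon)$. Using the already-established estimate $\omega(y) \leq \alpha(\|y\|_A)$ valid for $\|y\|_A \leq \rho$, I convert this into an $\omega$-bound: for every $\varepsilon > 0$ there exists $\widehat{T}(\varepsilon) \geq 0$ such that $\omega(\phi(t;x)) \leq \varepsilon$ whenever $x \in \mathcal{C}_r$, $\phi \in \Pd(x)$, and $t \geq \widehat{T}(\varepsilon)$. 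Defining the worst-case envelope
\[
\eta_r(t) := \sup\{\omega(\phi(\tau;x)) : x \in \mathcal{C}_r,\ \phi \in \Pd(x),\ \tau \geq t\},
\]
we have $\eta_r(0) \leq \gamma(r) < \infty$ by Claim \ref{clm:clm_ym2}, $\eta_r$ is nonincreasing in $t$, $\eta_r(t) \to 0$ as $t \to \infty$, and $\omega(\phi(t;x)) \leq \eta_r(t)$ for all admissible triples $(x,\phi,t)$.

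Second, I would construct $\psi_r^{-1}$ as a strictly decreasing bijection $(0,\infty) \to (0,\infty)$ that overestimates $\eta_r$. A concrete choice is
\[
\psi_r^{-1}(t) := \eta_r(t) + e^{-t} + \frac{1}{t}, \quad t > 0,
\]
which is the sum of a nonincreasing term and a strictly decreasing one, tends to $\infty$ as $t \to 0^+$, tends to $0$ as $t \to \infty$, and bounds $\omega(\phi(t;x))$ from above. Its inverse $\psi_r$ is then strictly decreasing on $(0,\infty)$, and the required limit $\lim_{t\to\infty} \psi_r^{-1}(t) = 0$ is immediate.

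The main obstacle I anticipate is making $\psi_r^{-1}$ a genuine bijection of $(0,\infty)$ onto itself: since $\eta_r$ is only nonincreasing, it may exhibit plateaus or downward jumps, which translate into flat regions or gaps in the image of the candidate $\psi_r^{-1}$. The added terms $e^{-t}$ and $1/t$ give strict monotonicity and the correct limits at both ends, but one may still need to smooth $\eta_r$ (for example by convolving with a strictly positive, compactly supported kernel, or by averaging $\int_t^{t+1}\eta_r(s)\,ds$) to guarantee continuity and thus a well-defined inverse on all of $(0,\infty)$. The conceptual content, however, is entirely contained in the uniform attraction provided by Proposition \ref{prop:uniform}; the remainder is bookkeeping to place the estimate in the form required for the subsequent construction of a $\mathcal{KL}$ bound.
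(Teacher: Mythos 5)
Your proposal is correct and is essentially the paper's own argument: both rest on Proposition \ref{prop:uniform} applied to the compact sublevel set $\mathcal{C}_r$, the conversion $\omega(y)\le\alpha(\norm{y}_A)$ near $A$ (with the level capped by $\rho$), and Claim \ref{clm:clm_ym2} for the transient bound; the only difference is that the paper constructs the attraction-time function $T_r(\varepsilon)$ and inverts it, whereas you construct the time-domain envelope $\eta_r(t)$ directly and invert at the end — the same monotone relation read in the other variable. One caution on your optional repair of the continuity/bijectivity issue: a forward average such as $\int_t^{t+1}\eta_r(s)\,ds$, or a symmetric mollification, need not dominate $\eta_r(t)$ (hence may fail to dominate $\omega(\phi(t;x))$) because $\eta_r$ is nonincreasing; use a backward average, or more simply any continuous nonincreasing majorant of $\eta_r$ tending to zero, before adding $e^{-t}+1/t$.
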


\textbf{Proof of Claim \ref{clm:clm_ym1}} 
For each $0<\varepsilon\leq \gamma(r)$, %
by Proposition \ref{prop:uniform}, we can find a $T_r(\eps)=T(\min(\alpha^{-1}(\varepsilon),\rho))>0$ such that for all $x\in\mathcal{C}_r$, $\phi\in\Pd(x)$, and $t\geq T_r(\varepsilon)$, we have
\begin{equation}\label{E: claim27}
  \|\phi(t;x)\|_A<\min(\alpha^{-1}(\varepsilon),\rho)\leq \rho. 
\end{equation}
Equation \eqref{E: claim27} also implies 
\begin{equation}\label{eq:boundphi}
\omega(\phi(t;x))\leq \alpha(\alpha^{-1}(\varepsilon))=\varepsilon, 
\end{equation}
for all $x\in\mathcal{C}_r$, $\phi\in\Pd(x)$, and $t\geq T_r(\varepsilon)$. For $\varepsilon>\gamma(r)$,  we set $T_r(\eps)=0$ and (\ref{eq:boundphi}) still holds because $\omega(\phi(t;x))\le \gamma(r) < \eps$ for all $t\ge 0$ by Claim \ref{clm:clm_ym2}. Note that for each fixed $r$, the function $T_r(\eps)$ can be chosen to be nonincreasing in $\varepsilon$ and by definition $\lim_{\varepsilon\rightarrow +\infty}T_r(\varepsilon)=0$; for each fixed $\varepsilon>0$, $T_{r}(\varepsilon)$ can be chosen to be nondecreasing in $r$. Based on $T_r(\eps)$, we can find $\psi_r:\mathbb{R}_{>0}\rightarrow\mathbb{R}_{>0}$ such that $\psi_r(\varepsilon)\geq T_r(\varepsilon)$ for all $\varepsilon>0$. The function $\psi_r$ can be constructed as strictly decreasing to zero (hence its inverse is defined on $\Real_{>0}$ and also strictly decreasing) and satisfying $\lim_{t\ra\infty} \psi_r^{-1}(t)=0$.  For each $t>0$, let $\varepsilon=\psi_r^{-1}(t)$. We have $t=\psi_{r}(\eps)\geq T_r(\varepsilon)$. Hence $x\in\mathcal{C}_r$ implies that $ \omega(\phi(t;x))\leq \eps=\psi_r^{-1}(t)$.  \qed

Now we force $\psi_r^{-1}(0)=\infty$ defined in Claim \ref{clm:clm_ym1} and let $\beta(s,t):=\min\{\gamma(s),\inf_{r\in(s,\infty)}\psi_r^{-1}(t)\}$ with $\gamma$ defined in Claim \ref{clm:clm_ym2}. Then $\beta\in\mathcal{KL}$\footnote{This construction of $\mathcal{KL}$ function does not impose continuity. Nonetheless, as pointed out in \cite[Remark 3]{teel2000smooth}, any (potentially noncontinuous) $\mathcal{KL}$ function can be upper bounded by a continuous $\mathcal{KL}$ function.} and \eqref{E: eq_ym1} holds. \hfill \BlackBox

Once Lemma \ref{lma:KL-stability} is proved, the proof of Proposition \ref{prop:clf} follows from a standard converse Lyapunov argument (see \cite[proof of Theorem 1]{teel2000smooth}). We provide an outline of the proof as follows. 

\begin{lem}[Sontag \cite{sontag1998comments}]\label{lma: sontag}
For each $\beta\in \mathcal{KL}$ and each $\lambda>0$, there exist functions $\alpha_1,\alpha_2\in\mathcal{K}_{\infty}$ such that $\alpha_1$ is locally Lipschitz and
\begin{equation}
\alpha_1(\beta(s,t))\leq \alpha_2(s)e^{-\lambda t},\;\;\;\;\forall (s,t)\in\mathbb{R}_{\ge 0}\times\mathbb{R}_{\ge 0}.
\end{equation}
\end{lem}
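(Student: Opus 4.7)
The plan is to rearrange the sought inequality as $\alpha_1(\beta(s,t))\, e^{\lambda t} \leq \alpha_2(s)$, then pick $\alpha_1$ to decay fast enough near $0$ that the left-hand side is uniformly bounded in $t$ for each fixed $s$, and take $\alpha_2(s)$ to be (a continuous $\mathcal{K}_\infty$ majorant of) this supremum. After the time rescaling $t\mapsto t/\lambda$ we may assume $\lambda=1$ and restore $\lambda$ at the end.

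The decay rate of $\alpha_1$ is tuned against a hitting-time schedule. Since $\beta(n{+}1,t)\to 0$ as $t\to\infty$ for each integer $n$, choose strictly increasing times $0=T_0<T_1<T_2<\cdots$ with $\beta(n{+}1,T_n)\leq 2^{-n}$. Define $\alpha_1$ by $\alpha_1(2^{-n}):=2^{-n}e^{-T_{n+1}}$ for $n\geq 0$, interpolate linearly on each dyadic interval $[2^{-n},2^{-(n-1)}]$, and extend linearly on $[1,\infty)$. This yields a strictly increasing, continuous, unbounded function with $\alpha_1(0)=0$; its slopes on successive dyadic intervals are of order $e^{-T_n}$ and in particular uniformly bounded, so $\alpha_1$ is locally Lipschitz on $[0,\infty)$, including at the origin.

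Next set $\tilde\alpha_2(s):=\sup_{t\geq 0}\alpha_1(\beta(s,t))\,e^t$. For any $s\geq 0$, fix an integer $N\geq s$; monotonicity of $\beta$ in both arguments gives $\beta(s,t)\leq\beta(N{+}1,T_n)\leq 2^{-n}$ whenever $t\geq T_n$ and $n\geq N$. Hence on each interval $[T_n,T_{n+1}]$ with $n\geq N$, $\alpha_1(\beta(s,t))\,e^t\leq\alpha_1(2^{-n})\,e^{T_{n+1}}=2^{-n}$, while on $[0,T_N]$ the same quantity is bounded by $\alpha_1(\beta(s,0))\,e^{T_N}$, which is finite. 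Thus $\tilde\alpha_2(s)$ is finite for every $s$, nondecreasing in $s$, and satisfies $\tilde\alpha_2(0)=0$ (because $\beta(0,\cdot)\equiv 0$). Setting $\alpha_2(s):=\tilde\alpha_2(s)+\alpha_1(s)$ yields a radially unbounded nondecreasing function vanishing at zero, and replacing it by a continuous majorant (a standard smoothing step for monotone functions) delivers $\alpha_2\in\mathcal{K}_\infty$. The required inequality $\alpha_1(\beta(s,t))\leq\alpha_2(s)\,e^{-\lambda t}$ follows by construction after undoing the rescaling.

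The main obstacle is juggling the three demands on $\alpha_1$ simultaneously: local Lipschitz regularity, class-$\mathcal{K}_\infty$ structure, and decay fast enough near $0$ to dominate $e^{\lambda t}$ when $\beta(s,\cdot)$ converges arbitrarily slowly (so that $T_n$ may grow arbitrarily fast). The key trick is to absorb the blow-up factor $e^{T_{n+1}}$ directly into the freely chosen value $\alpha_1(2^{-n})$, while the geometric $2^{-n}$ factor keeps the piecewise-linear slopes bounded and in fact vanishing near the origin. A secondary technical point is the continuous-$\mathcal{K}_\infty$ upgrading of $\tilde\alpha_2+\alpha_1$, which is a routine majorization argument but must be executed carefully to preserve the inequality.
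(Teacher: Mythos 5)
The paper itself gives no proof of this lemma (it simply cites Sontag), so your self-contained construction is a legitimate route, and most of it is sound: the hitting-time schedule, the piecewise-linear $\alpha_1$ with $\alpha_1(2^{-n})=2^{-n}e^{-T_{n+1}}$ (which is indeed strictly increasing, since $T_{n+1}<T_{n+2}$, and globally Lipschitz with slope at most $2e^{-T_n}\le 2$ on each dyadic interval), the pointwise finiteness of $\tilde\alpha_2(s)=\sup_{t\ge0}\alpha_1(\beta(s,t))e^{t}$, and the $\lambda$-rescaling. The genuine gap is in the last step, where you pass from $\tilde\alpha_2+\alpha_1$ to a continuous $\mathcal{K}_\infty$ majorant ``by a standard smoothing step for monotone functions.'' A nondecreasing function that merely satisfies $g(0)=0$ need not admit \emph{any} class-$\mathcal{K}$ majorant: if $g$ has a jump at the origin (e.g.\ $g(0)=0$, $g(s)=1$ for $s>0$), no continuous function vanishing at zero can dominate it. So you must show $\tilde\alpha_2(s)\to0$ as $s\to0^{+}$, and nothing you wrote establishes this: your own estimate (taking $N=1$ for $s\le1$) only bounds $\tilde\alpha_2(s)$ near zero by $\max\{\alpha_1(\beta(s,0))e^{T_1},\,\tfrac12\}$, which does not vanish. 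As written, the upgrade of $\alpha_2$ to $\mathcal{K}_\infty$ is unjustified, and this is exactly the delicate point of Sontag's lemma, not a routine afterthought.

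Fortunately the gap is repairable with the same absorption trick you already use for finiteness. For small $s\le1$ let $m\ge1$ satisfy $\beta(s,0)\le 2^{-(m-1)}$; then on $[0,T_m]$ you get $\alpha_1(\beta(s,t))e^{t}\le\alpha_1(2^{-(m-1)})e^{T_m}=2^{-(m-1)}$, while on each $[T_n,T_{n+1}]$ with $n\ge m$ your interval bound gives $2^{-n}\le 2^{-m}$; hence $\tilde\alpha_2(s)\le 2^{-(m-1)}$, and since $m\to\infty$ as $s\to0^{+}$, $\tilde\alpha_2$ is right-continuous at $0$ and a continuous $\mathcal{K}_\infty$ majorant of $\tilde\alpha_2+\alpha_1$ does exist, preserving the inequality. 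Two smaller slips you should also fix: requiring $T_0=0$ together with $\beta(1,T_0)\le 1$ may be unsatisfiable (just drop the $n=0$ condition or start the schedule at $n=1$; it is never used once $N\ge1$), and you should either choose $T_n\to\infty$ (say $T_n\ge n$) or observe that for $t\ge\sup_n T_n$ one has $\beta(s,t)=0$, so that the intervals you list really do control the supremum over all of $[0,\infty)$.
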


\textbf{Proof of Proposition 10}
 Based on the Lemma \ref{lma:KL-stability} and by Sontag's lemma (Lemma \ref{lma: sontag}) on $\mathcal{KL}$-estimates, we can find $\alpha_1$ and $\alpha_2$ such that
\begin{equation}\label{E:sontag}
\alpha_1(\omega(\phi(t;x)))\leq \alpha_1(\beta(\omega(x),t))\leq \alpha_2(\omega(x))e^{-2t}
\end{equation}
for any $x\in D$, $\phi\in\Pd(x)$, and $t\ge 0$. Now define
\begin{equation}\label{E: V_construction}
V(x):=\sup\limits_{t\geq 0, x\in D, \phi\in\Pd(x)}\alpha_1(\omega(\phi(t;x)))e^t.
\end{equation}
Then $V(x)\geq \sup_{\phi\in\Pd(x)}\alpha_1(\omega(\phi(t;x)))=\alpha_1(\omega(x))$ for all $x\in D$, and it is straightforward from \eqref{E:sontag} that $V(x)\leq \sup_{t\geq 0}\alpha_2(\omega(x))e^{-t}\leq \alpha_2(\omega(x))$. Therefore condition $(4)$ in Theorem $8$ is satisfied.

To show the satisfaction of condition $(5)$ in Theorem \ref{thm:conv1}, we can %
show that
\begin{equation}
V(\phi(t;x))\leq V(x)e^{-t},\quad \forall \phi\in\Pd(x),\forall t\ge 0,
\end{equation}
with a similar reasoning as the Claim 1 in \cite{teel2000smooth}. The local Lipschitz continuity of $V$ follows from the Claim 3 in \cite{teel2000smooth}. Then we have
\begin{equation}
\begin{split}
\nabla V(x)\cdot (f(x)+d) & \leq \liminf\limits_{t\rightarrow 0^+}\frac{V(\phi(t;x,d))-V(x)}{t}\\
& \leq \liminf\limits_{t\rightarrow 0^+}V(x)\frac{e^{-t}-1}{t}=-V(x).
\end{split}
\end{equation}
The smoothness of $V$ can also be extended from the locally Lipschitz region $D\setminus A$ to the whole set $D$ (by following the proof of Theorem 1 (step 3) in \cite{teel2000smooth}).  \hfill \BlackBox

\end{document}